\documentclass[11pt]{article}
\usepackage[margin=1in]{geometry}

\usepackage{amsmath,amsfonts,amssymb}
\usepackage{amsthm}
\usepackage{algorithm}
\usepackage{algorithmic}
\usepackage{bm}
\usepackage{graphicx}
\usepackage{multirow}
\usepackage{mathtools}
\usepackage{subcaption}
\usepackage{textcomp}
\usepackage{xcolor}
\usepackage[inline]{enumitem}
\newtheorem{theorem}{Theorem}
\DeclareMathOperator{\diag}{diag}
\DeclareMathOperator*{\argmin}{arg\,min}
\usepackage[hyphens]{url}
\usepackage{cleveref}

\title{Scalable Dual Coordinate Descent for Kernel Methods}

\begin{document}

\author{Zishan Shao\\shaoz20@wfu.edu\\Wake Forest University\\\\ Aditya Devarakonda\\devaraa@wfu.edu\\Wake Forest University}
\date{}
\maketitle

\begin{abstract}
    Dual Coordinate Descent (DCD) and Block Dual Coordinate Descent (BDCD) are important iterative methods for solving convex optimization problems.
    In this work, we develop scalable DCD and BDCD methods for the kernel support vector machines (K-SVM) and kernel ridge regression (K-RR) problems.
    On distributed-memory parallel machines the scalability of these methods is limited by the need to communicate every iteration.
    On modern hardware where communication is orders of magnitude more expensive, the running time of the DCD and BDCD methods is dominated by communication cost.
    We address this communication bottleneck by deriving $s$-step variants of DCD and BDCD for solving the K-SVM and K-RR problems, respectively.
    The $s$-step variants reduce the frequency of communication by a tunable factor of $s$ at the expense of additional bandwidth and computation.
    The $s$-step variants compute the same solution as the existing methods in exact arithmetic.
    We perform numerical experiments to illustrate that the $s$-step variants are also numerically stable in finite-arithmetic, even for large values of $s$.
    We perform theoretical analysis to bound the computation and communication costs of the newly designed variants, up to leading order.
    Finally, we develop high performance implementations written in C and MPI and present scaling experiments performed on a Cray EX cluster.
    The new $s$-step variants achieved strong scaling speedups of up to $9.8\times$ over existing methods using up to $512$ cores.
\end{abstract}

\section{Introduction}
Optimization methods, particularly Dual Coordinate Descent (DCD) and Block Dual Coordinate Descent (BDCD), are fundamental to efficiently training nonlinear machine learning model on large-scale datasets.
These iterative methods are particularly useful in solving regression and classification (both regularized and unregularized) optimization problems that arise in various research areas such as biology, computer vision, biophysics and healthcare.
Given the volume of data generated in these scientific disciplines and the availability of high-performance computing architectures, scaling these methods to solve optimization problem quickly and efficiently is an important challenge.
However, one of the main bottlenecks to scaling these methods in such distributed-memory environments is the cost of communication.
Since communication cost often dominates computation cost, especially in the context of large datasets, we propose to develop efficient, scalable DCD and BDCD methods that defer communication without altering convergence behavior or solution accuracy.
Traditionally, the performance of DCD and BDCD implementations is limited by the frequency of communication.
This is due the fact that DCD and BDCD are iterative optimization methods which require communication at every iteration to train a candidate machine learning model.
Iterative algorithms and their communication bottlenecks are well known, particularly with respect to Krylov subspace methods \cite{Hoemmen_2010,Demmel_Hoemmen_Mohiyuddin_Yelick_2008,Chronopoulos_Gear_1989,Kim_Chronopoulos_1992,Carson_2015,Carson_Knight_Demmel_2013}.
In this work, we borrow ideas from Krylov subspace methods and apply them to the DCD and BDCD optimization methods to reduce the frequency of communication by a factor of $s$.
The contributions of the paper are,
\begin{itemize}
    \item Derivation and theoretical analysis of $s$-step variants of DCD and BDCD methods for solving the K-SVM and K-RR problems which reduce the frequency of communication by a factor of $s$ at the expense of additional computation and  storage.
    \item Empirical evaluation of the numerical stability (in MATLAB) of the $s$-step variants as a function of $s$ for several benchmark classification and regression datasets.
    \item Performance evaluation of the strong scaling and running time breakdown of the proposed methods on a Cray EX system, which show speedups of up to $9.8\times$ on large-scale dense and sparse datasets.
\end{itemize}

\section{Related Work}
In this section, we briefly survey the existing state of the art on scaling optimization methods with a focus on reducing the communication costs in distributed-memory and shared-memory parallel computing environments.

$s$-step methods have recently been adapted and generalized to nonlinear machine learning tasks in order to scale iterative optimization methods.
This body of work includes $s$-step variants of block coordinate descent methods applied to the ridge regression (L2-regularized least squares) problem \cite{AD_BDCD,Soori_Devarakonda_Blanco_Demmel_Gurbuzbalaban_Dehnavi_2018,Zhu_Chen_Wang_Zhu_Chen_2009}, $s$-step variants of novel stochastic FISTA (S-FISTA) and stochastic Newton (S-PNM) methods for proximal least squares \cite{Soori_Devarakonda_Blanco_Demmel_Gurbuzbalaban_Dehnavi_2018}, and 
P-packSVM which applied a variant of the $s$-step technique to obtain scalable linear SVM classifiers in distributed, cloud environments \cite{Zhu_Chen_Wang_Zhu_Chen_2009}.
These methods showed significant speedups in distributed-memory and cloud settings due to reduced communication overhead.

All of these adaptations to machine learning were build on prior $s$-step Krylov methods work from numerical linear algebra.
Prior work in numerical linear algebra \cite{chronopoulos1987class,chronopoulos91,Chronopoulos_Gear_1989,Kim_Chronopoulos_1992}
developed $s$-step Krylov methods for solving linear systems on distributed multiprocessors.
This work was further generalized to a wide variety of Krylov methods with three-term recurrences \cite{Hoemmen_2010,Mohiyuddin_Hoemmen_Demmel_Yelick_2009,Demmel_Hoemmen_Mohiyuddin_Yelick_2008}.
Further work developed practical Krylov methods with numerical stability analysis and strategies to stability $s$-step Krylov methods \cite{Carson_2015,Carson_Knight_Demmel_2013,Williams_Lijewski_Almgren_Straalen_Carson_Knight_Demmel_2014}.
Our expands this body of work to kernelized machine learning problem, particularly kernel SVM and kernel ridge regression, targeting first-order, coordinate descent methods.
One common thread in this line of work, is that $s$-step variants are mathematically equivalent to their classical counterpart.
Thus, the limiting factor on $s$ becomes numerical instability and the computation-communication trade off.

Alternative approaches to reducing communication include asynchronous methods, federated learning approaches (i.e. divide and conquer), and approximation methods.
These approaches typically relax convergence and solution accuracy requirements in order to reduce communication.
In this section, we briefly survey the asynchronous methods work.
HOGWILD! \cite{Recht_Re_Wright_Niu_2011} presents an asynchronous SGD method for shared-memory settings, which reduces the latency bottleneck by removing synchronization.
This work proves that if solution updates have bounded delay, then this asynchronous SGD variant converges to the true solution, in expectation.
This work was generalized to mini-batch SGD \cite{sallinen16}, also implemented in the shared-memory setting, shows that taking a batch size greater than $1$ allows for better memory-bandwidth utilization.
This work showed that the asynchronous mini-batch SGD method also converges and exhibits additional performance improvements over HOGWILD!
The asynchronous approach was also extended to dual coordinate descent methods for solving the SVM problem \cite{You_Lian_Liu_Yu_Dhillon_Demmel_Hsieh_2016}, where a greedy coordinate selection algorithm is introduced to further accelerate convergence of the asynchronous DCD method.
This work also targeted a multicore, shared-memory environment.

Divide-and-conquer or federated learning approaches such as CoCoA and proxCoCoA+ \cite{jaggi2014communication,SmithFJJ15} offer a framework for distributed optimization in cloud environments (using Apache Spark). 
CoCoA and proxCoCoA+, in particular, reduce synchronization by performing dual coordinate descent on locally stored data for L2 and L1-regularized convex optimization problem.
In these frameworks, each Apache Spark executor maintains a local solution which is optimized using only locally stored data.
The local solution is occasionally sum/average-reduced after a tunable number of local iterations.
This work introduces a convergence-performance trade off where differing the aggegation step for too many iterations impacts convergence and final model accuracy.
A similar approach is also shown to work for SGD \cite{Stich19} for the L2-regularized logistic regression problem.
This work sparked follow-on work in federated learning and has been generalized to many different optimization methods and machine learning problems.
The main drawback of these federated learning approaches is that they exhibit a convergence-performance trade off where scaling to additional threads/processors negatively affects convergence and solution accuracy.

Finally, we survey a subset of work on approximation methods which improve performance by exploiting the low-rank structure of the kernel matrix, for kernelized ML problems \cite{chavez20,you15-ipdps,you18-ics}.
These prior works exploit low-rank structure in two ways:
\begin{enumerate*}
    \item construct a hierarchical semi-separable (HSS) approximation to the kernel matrix using approximate nearest-neighbors \cite{chavez20}.
    \item Utilize (balanced) K-means/medoids clustering to divide data between processors \cite{you15-ipdps,you18-ics}.
\end{enumerate*}
Both approaches trade accuracy/convergence for performance in the distributed-memory setting.
\section{Derivation}
In this section, we introduce the K-SVM and K-RR problems and 
present the dual coordinate descent (DCD) and its blocked (BDCD) variant for solving these problems.
We also present $s$-step derivations of DCD for K-SVM and BDCD for K-RR.
\subsection{Support Vector Machine}
Support Vector Machine (SVM) \cite{Boser_Guyon_Vapnik_1992,Guyon_Boser_Vapnik_1992,Cortes_Vapnik_1995} are supervised learning models used for binary classification.
SVM classifies the input dataset by finding a hyperplane defined by $H := \left\{ x~|~v^\intercal x - \rho = 0 \right\}$, where $v \in \mathbb{R}^{n}$ is any vector, $\rho \in \mathbb{R}$ is the intercept, and $x \in \mathbb{R}^{n}$ is the normal vector to the hyperplane.
Given a dataset $A \in \mathbb{R}^{m \times n}$ and a vector of binary labels $y \in \mathbb{R}^m$ s.t. $y_i \in \{-1, +1\}~\forall~i=1,\ldots,m$, the SVM optimization problem finds a hyperplane which maximizes the distance (margin) between the two classes.
A good separation is achieved when the hyperplane, $x$, has the greatest distance from the nearest training data points (support vectors) from the two classes.
This formulation is known as the hard-margin SVM problem, which implicitly assumes that the data points are linearly separable.
For datasets containing errors or where the margin between the two classes is small, the hard-margin SVM formulation may not achieve high accuracy.
The soft-margin SVM problem introduces slack variables for each data point, $\xi_i~\forall~i=1,\ldots,m$, so that the margin can be increased by allowing misclassification of some data points.
Note that setting $C = 0$ recovers the hard-margin formulation, so we will focus on solving the soft-margin SVM problem:
\begin{align}
  \argmin_{x \in \mathbb{R}^n} & \frac{1}{2} \|{x}\|_{2}^2 + C\sum_{i=1}^{m} \xi_i \nonumber\\
  \text{subject to} \, \, & y_i({a}_{i,:} {x} + \rho) \leq 1 - \xi_i \nonumber\\
  & \xi_i \geq 0 \nonumber\\
  &\forall~i = 1,\ldots, m
  \label{eq:softsvm}
\end{align}

The constraints in \eqref{eq:softsvm} can be re-written in the empirical risk minimization form\footnote{We omit the bias term in the derivation for presentation clarity.} by introducing the hinge loss for each $\xi_i$:
\begin{align*}
   \textbf{SVM-L1: } &\max(1 - y_i{a_{i,:}}x, 0)\\
   \textbf{SVM-L2: } &\max(1 - y_i{a_{i,:}}x, 0)^2.
\end{align*}
We refer to the hinge loss variant as the L1-SVM problem and the squared hinge-loss variant as the L2-SVM problem.
For datasets that are not linearly separable in $n$ dimensions, the SVM problem can be solved in a high-dimensional feature space by introducing a non-linear kernel function.
The kernelized variants of the L1-SVM and L2-SVM problems can be obtained by deriving their Lagrangian dual problems,
kernelized SVM-L1 (K-SVM-L1) has the following form,
\begin{align*}
    \argmin_{\alpha \in \mathbb{R}^m} &\frac{1}{2} \sum_{i = 1}^m \sum_{j = 1}^m \alpha_i \alpha_j y_i y_j \mathcal{K}(a_{i,:},a_{j,:})  - \sum_{i = 1}^m \alpha_i\\
    &\text{subject to}~ 0 \leq \alpha_i \leq C,
\end{align*}
and kernelized SVM-L2 (K-SVM-L2) has the following form,
\begin{align*}
    \argmin_{\alpha \in \mathbb{R}^m} \frac{1}{2} \sum_{i = 1}^m \sum_{j = 1}^m \alpha_i \alpha_j y_i y_j \mathcal{K}(a_{i,:}, a_{j,:}) - \sum_{i = 1}^m \alpha_i + \frac{1}{4C}\sum_{i=1}^{m} {\alpha_i}^2.
\end{align*}
where $\alpha \in \mathbb{R}^m$ is the solution to the Lagrangian dual problem and $\mathcal{K}(a_{i,:}, a_{j,:})$ is a kernel function which defines an inner product space.
\Cref{tbl:kernels} lists the kernel functions used in this work.
\begin{table}
    \centering
    \begin{tabular}{c|c}
        Kernel function & Definition\\ \hline\hline
        \bf Linear & $a_{i,:} a_{j,:}^\intercal$\\\hline
        \bf Polynomial & $(c + a_{i,:} a_{j,:}^\intercal)^d$, for $c \geq 0, d \geq 2.$\\\hline
        \bf Radial Basis Function (RBF) & $\exp\left(-\sigma \Vert a_{i,:} - a_{j,:}\Vert_2^2\right)$, for $\sigma > 0$.
    \end{tabular}
    \caption{Kernel functions explored in this work}
    \label{tbl:kernels}
\end{table}
The L1 and L2 K-SVM problems can be solved using several algorithmic variants of coordinate descent \cite{DCD_Linear_SVM,platt1998sequential,You_Lian_Liu_Yu_Dhillon_Demmel_Hsieh_2016}.
In this work, we will focus on cyclic coordinate descent, which we will refer to as Dual Coordinate Descent (DCD).
DCD is an iterative algorithm which reduces the K-SVM problems to single-variable (or coordinate) problems which have closed-form solutions.
Once a sub-problem is solved, DCD proceeds by selecting a new variable, solving the sub-problem with respect to the chosen variable, and repeating this process until convergence.
\Cref{alg:dcd-ksvm} shows the DCD algorithm for solving the L1 and L2 K-SVM problems.
\begin{algorithm}
\caption{Dual Coordinate Descent (DCD) for Kernel SVM}
\begin{algorithmic}[1]
\STATE \textbf{Input:} $A \in \mathbb{R}^{m \times n}, y \in \mathbb{R}^m,  \alpha_0 \in \mathbb{R}^m, H > 1, C \in \mathbb{R}, \newline \mathcal{K} := A \times A \mapsto \mathbb{R}$
\STATE $\begin{cases}
    \nu = C,~\omega = 0,& \text{K-SVM-L1}\\
    \nu = \infty,~\omega = 1/2C, & \text{K-SVM-L2}
\end{cases}$
\STATE $\tilde{A} = \diag(y) \cdot A$
\FOR{$k = 1,2,\ldots,H$}
    \STATE Choose $i_k \in [m]$ uniformly at random.
    \STATE $e_{i_k} \in \mathbb{R}^m$, the $i_k$-th standard basis vector.
    \STATE ${u_k} = \mathcal{K}\left(\tilde A, e_{i_k}^\intercal \tilde A\right)$
    \STATE $\eta_k = e_{i_k}^\intercal u_k  + \omega$
    \STATE $g_k = u_k^\intercal \alpha_{k-1} - 1 + \omega e_{i_k}^\intercal \alpha_{k-1}$
    \STATE $\tilde{g}_k = \left| \min\left(\max\left(e_{i_k}^\intercal \alpha_{k-1} - g_k, 0\right), \nu\right) - e_{i_k}^\intercal \alpha_{k-1}\right|$
    \IF{$\tilde{g}_k \neq 0$}
    \STATE $\theta_k = \min\left(\max\left(e_{i_k}^\intercal \alpha_{k-1} - \frac{g_k}{\eta_k}, 0\right), \nu\right)- e_{i_k}^\intercal \alpha_{k-1}$
    \ELSE
        \STATE $\theta_k = 0$
    \ENDIF
    \STATE $\alpha_k = \alpha_{k-1} + \theta_k e_{i_k}$
\ENDFOR
\STATE \textbf{Output:} $\alpha_H$
\end{algorithmic}
\label{alg:dcd-ksvm}
\end{algorithm}
\subsection{s-Step DCD Derivation}
Note that \Cref{alg:dcd-ksvm} selects a single data point from $A$ at each iteration.
This limits DCD performance to BLAS-1 and BLAS-2 operation in each iteration.
This limitation also suggests that a distributed-memory parallel implementation of DCD would require communication at every iteration.
We propose to improve the performance of DCD by deriving a mathematically equivalent variant we refer to as $s$-step DCD which avoids communication for $s$ iterations.
We begin by modifying the iteration index from $k$ (for DCD) to $sk + j$, where $j \in \{1, 2, \ldots, s\}$ and $k \in \{0, 1, \ldots, H/s\}$.
We begin the $s$-step derivation by assuming that $\alpha_{sk}$ was just computed and show how to compute the next $s$ solution updates.
We see that from \Cref{alg:dcd-ksvm} $u_k ,g_k,$ and $\alpha_k$ are vector quantities whereas $\eta_k$ and $\theta_k$ are scalar quantities.
The following two solution updates at iterations $sk+1$ and $sk+2$ require computing the gradients $g_{sk+1}$ and $g_{sk+2}$, which are defined by:
\begin{align*}
    g_{sk+1} &= u_{sk+1}^\intercal \alpha_{sk} - 1 + \omega e_{i_{sk+1}}^\intercal \alpha_{sk} \\
    g_{sk+2} &= u_{sk+2}^\intercal \alpha_{sk+1} - 1 + \omega e_{i_{sk+2}}^\intercal \alpha_{sk+1}
\end{align*}
However, notice we can also replace $\alpha_{sk+1}$ with its equivalent quantity from iteration $sk$.
\begin{align*}
    \alpha_{sk+1} &= \alpha_{sk} + \theta_{sk+1} e_{i_{sk+1}}
\end{align*}
Given that $\theta_{sk+1}$ is a scalar quantity, $g_{sk+2}$ can be rewritten as,
\begin{align*}
    g_{sk+2} &= {u}_{sk+2}^\intercal \alpha_{sk} - 1 + \theta_{sk+1}{u}_{sk+1}^\intercal e_{i_{sk+1}} + \\
    & \omega e_{i_{sk+2}}^\intercal \alpha_{sk} + \omega e_{i_{sk+2}}^\intercal \theta_{sk+1} e_{i_{sk+1}}
\end{align*}
This recurrence unrolling suggests that $g_{sk+2}$ can be computed using $\alpha_{sk}$ provided that the quantities $u_{sk+1}, u_{sk + 2}$ and $\theta_{sk+1}$ are known.
Since $u_{sk+1}$ and $u_{sk + 2}$ are independent, they can be computed simultaneously.
The sequential dependence on $\theta_{sk+1}$, however, cannot be eliminated.
Notice that $g_{sk +j}$ is defined as
\begin{align*}
    g_{sk+j} = {u}_{sk+j}^\intercal \alpha_{sk+j-1} - 1 + \omega e_{i_{sk+j}}^\intercal \alpha_{sk+j-1}.
\end{align*}
Since $\alpha_{sk+j-1}$ can be unrolled as follows
\begin{align*}
    \alpha_{sk+j-1} &= \alpha_{sk} + \sum_{t=1}^{j-1} \theta_{sk+t}e_{sk+t},
\end{align*}
we can define $g_{sk+j}$ in terms of $\alpha_{sk}$:
\begin{align*}
    g_{sk+j} &= {u}_{sk+j}^\intercal \alpha_{sk} + {u}_{sk+j}^\intercal \sum_{t=1}^{j-1} \theta_{sk+t} e_{i_{sk+t}} \\
    &- 1 + \omega e_{i_{sk+j}}^\intercal \alpha_{sk} + \omega e_{i_{sk+j}}^\intercal \sum_{t=1}^{j-1} \theta_{sk+t} e_{i_{sk+t}}.
\end{align*}
Since all quantities $u_{sk+j}$ are independent, we can compute them upfront after which all quantities $\theta_{sk+j}$ are computed sequentially.
\Cref{alg:sstep-dcd-ksvm} shows the resulting $s$-step DCD algorithm for solving the L1 and L2 K-SVM problems.
\begin{algorithm}
\caption{$s$-Step DCD for Kernel SVM}
\label{alg:sstep-dcd-ksvm}
\begin{algorithmic}[1]
\STATE \textbf{Input:} $A \in \mathbb{R}^{m \times n}, y \in \mathbb{R}^m,  \alpha_0 \in \mathbb{R}^m, C \in \mathbb{R}, s \in \mathbb{Z}^+, H \geq s, \newline \mathcal{K} := A \times A \mapsto \mathbb{R}$
\STATE $\begin{cases}
    \nu = C,~\omega = 0,& \text{K-SVM-L1}\\
    \nu = \infty,~\omega = 1/2C, & \text{K-SVM-L2}
\end{cases}$
\STATE $\tilde{A} = \diag(y) \cdot A$
\FOR{$k = 0, \ldots,H/s$}
    \FOR{$j = 1, \ldots, s$}
        \STATE $i_{sk+j} \in [m]$, chosen uniformly at random.
        \STATE $e_{i_{sk+j}} \in \mathbb{R}^m$, the $i_{sk+j}$-th standard basis vector.
    \ENDFOR
    \STATE $V_k = \left[ e_{i_{sk+1}}, \ldots, e_{i_{sk+s}} \right]$
    \STATE $\tilde{A}_k = V_k^\intercal \tilde{A}$
    \STATE $U_k = \mathcal{K}(\tilde{A}, \tilde{A}_k)$
    \STATE $G_k = V_k^\intercal U_k + \omega I$
    \STATE $[\eta_{sk+1}, \ldots, \eta_{sk+s}]^\intercal = \text{diag}(G_k)$
    \FOR{$j = 1, \ldots, s$}
        \STATE $\rho_{sk+j} = e_{i_{sk+j}}^\intercal \alpha_{sk} + e_{i_{sk+j}}^\intercal \sum_{t=1}^{j-1} \theta_{sk+t} e_{i_{sk+t}}$
        \STATE
            $g_{sk+j} = \left({U}_{k}e_j\right)^\intercal \alpha_{sk} - 1 + \omega e_{i_{sk+j}}^\intercal \alpha_{sk} + \sum_{t=1}^{j-1}\left(e_{i_{sk+t}}^T U_ke_j\right)\theta_{sk+t} + \omega {e}_{i_{sk+j}}^\intercal \sum_{t=1}^{j-1} \theta_{sk+t} e_{i_{sk+t}}$
        \STATE $\tilde{g}_{sk+j} = \left|\min\left(\max\left(\rho_{sk+j} - g_{sk+j}, 0\right), \nu\right) - \rho_{sk+j}\right|$
        \IF{$\tilde{g}_{sk+j} \neq 0$}
            \STATE
            $\theta_{sk+j} = \min\left(\max\left(\rho_{sk+j} - \frac{g_{sk+j}}{\eta_{sk+j}}, 0\right), \nu\right) - \rho_{sk+j}$
        \ELSE
            \STATE $\theta_{sk+j} = 0$
        \ENDIF
    \ENDFOR
    \STATE $\alpha_{sk+s} = \alpha_{sk} + \sum_{t=1}^s \theta_{sk+t}e_{i_{sk+t}}$
\ENDFOR
\STATE \textbf{Output:} $\alpha_H$
\end{algorithmic}
\end{algorithm}
\subsection{Kernel Ridge Regression}
The ridge regression problem \cite{kernel-ridge} for regression is defined as follows:
$
    \argmin_{x \in \mathbb{R}^n} \frac{1}{2m} \Vert Ax - y \Vert_2^2 + \frac{\lambda}{2} \Vert x \Vert_2^2.
$
By deriving the Lagrangian dual formulation, we obtain the Kernel Ridge Regression (K-RR) problem,
\begin{equation}
    \argmin_{\alpha \in \mathbb{R}^m} \frac{1}{2} \left(\sum_{i = 1}^m \sum_{j = 1}^m \alpha_i \alpha_j \left(\frac{1}{\lambda}\mathcal{K}(a_{i,:}, a_{j,:}) + m\right)\right) - \sum_{i = 1}^m \alpha_i y_i
    \label{eq:krr-problem}
\end{equation}
In contrast to K-SVM, this problem can be solved in closed form.
However, when $m \gg 1$, explicitly computing and storing the $m \times m$ kernel matrix can be prohibitive.
In this work, we focus on iteratively solving the K-RR problem using the Block Dual Coordinate Descent (BDCD) method.
BDCD solves \eqref{eq:krr-problem} by randomly or cyclically selecting a block size, $b$, of samples from $A$ and solving a subproblem with respect to just those coordinates.
\Cref{alg:bdcd} shows the BDCD algorithm for solving the K-RR problem.
\begin{algorithm}
\caption{Block Dual Coordinate Descent (BDCD) for Kernel Ridge Regression}
\label{alg:bdcd}
\begin{algorithmic}[1]
\STATE \textbf{Input:} $A \in \mathbb{R}^{m \times n}, y \in \mathbb{R}^m, b \in \mathbb{Z}^+  s.t. b \leq m, H > 1, \alpha_0 \in \mathbb{R}^m$
\STATE $\mathcal{K} := A \times A \mapsto \mathbb{R}$
\FOR{\( k = 0,\ldots,H\)}
    \STATE choose \( \{i_l \in [m] | l = 1, 2, \ldots, b\} \) uniformly at random without replacement
    \STATE \( V_k = [e_{i_1}, e_{i_2}, \ldots, e_{i_{b}}] \)
    \STATE {$U_k = \mathcal{K}({A, V_k}^\intercal A)$}
    \STATE \( G_k = \frac{1}{\lambda} V_k^\intercal U_k + m I \)
    \STATE \(
    \begin{aligned}
        \Delta \alpha_k = G^{-1}_k &\left(V_k^\intercal y - m V_k^\intercal \alpha_{k-1} - \frac{1}{\lambda}U_k^\intercal \alpha_{k-1}\right)
    \end{aligned}
    \)
    \STATE \( \alpha_k = \alpha_{k-1} + V_k \Delta \alpha_k \)
\ENDFOR
\STATE \textbf{Output} \( \alpha_{H} \)
\end{algorithmic}
\end{algorithm}
\subsection{s-Step BDCD Derivation}
Similar to DCD, we can also unroll the recurrence relationship in \Cref{alg:bdcd}.
We see that from \Cref{alg:bdcd} that the matrix quantities $U_k \in \mathbb{R}^{b \times m}$ and $G_k \in \mathbb{R}^{b \times b}$ are required at every iteration to solve the subproblem and compute the vector quantity $\Delta \alpha_k \in \mathbb{R}^b$.
We begin the $s$-step derivation by modifying the iteration index from $k$ to $sk + j$ where $j \in \{1, 2, \ldots, s\}$ and $k \in \{0, 1, \ldots, H/s\}$, as before.
We assume that $\alpha_{sk}$ was just computed and show how to compute the next $s$ solution updates.
We focus on the subproblem solutions defined at iterations $sk +1$ and $sk+2$ for the $s$-step derivation, which are given by,
\begin{align*}
    \Delta \alpha_{sk+1} &= G^{-1}_{sk+1} \left(V_{sk+1}^\intercal y - m V_{sk+1}^\intercal \alpha_{sk} - \frac{1}{\lambda}U_{sk+1}^\intercal \alpha_{sk}\right)\\
    \Delta \alpha_{sk+2} &= G^{-1}_{sk+2} \left(V_{sk+2}^\intercal y - m V_{sk+2}^\intercal \alpha_{sk+1} - \frac{1}{\lambda}U_{sk+2}^\intercal \alpha_{sk+1}\right).
\end{align*}
Using the solution update, $\alpha_{sk+1} = \alpha_{sk} + V_{sk+1}\Delta \alpha_{sk+1}$, we can unroll the recurrence for $\Delta \alpha_{sk+2}$ by substitution.
This yields
\begin{align*}
    \Delta \alpha_{sk+2} = G^{-1}_{sk+2} \biggr(V_{sk+2}^\intercal y &- m V_{sk+2}^\intercal \alpha_{sk} -m V_{sk+2}^\intercal V_{sk+1}\Delta \alpha_{sk+1} \\
    &- \frac{1}{\lambda}U_{sk+2}^\intercal \alpha_{sk} - \frac{1}{\lambda}U_{sk+2}^\intercal V_{sk+1}\Delta \alpha_{sk+1}\biggr).
\end{align*}
Notice that the solutions at iteration $sk+1$ and $sk+2$ both depend on $\alpha_{sk}$, but $\Delta \alpha_{sk+2}$ requires additional correction terms because $\alpha_{sk+1}$ is never explicitly formed.
This recurrence unrolling can be generalized to an arbitrary future iteration, $sk + j$, as follows
\begin{align}
    \Delta \alpha_{sk+j} = G^{-1}_{sk+j} \biggr(V_{sk+j}^\intercal y &- m V_{sk+j}^\intercal \alpha_{sk} -m \sum_{t = 1}^{j-1} V_{sk+j}^\intercal V_{sk+t}\Delta \alpha_{sk+t} \nonumber \\
    &- \frac{1}{\lambda}U_{sk+j}^\intercal\alpha_{sk} - \frac{1}{\lambda}\sum_{t = 1}^{j-1}U_{sk+j}^\intercal V_{sk+t}\Delta \alpha_{sk+t}\biggr). \label{eq:bdcd-sstep}
\end{align}
In \eqref{eq:bdcd-sstep} the quantities $U_{sk + j}$ for $j  \in \{1, 2, \ldots, s\}$ can be computed upfront by selecting $sb$ coordinates and computing a kernel matrix that has a factor of $s$ additional rows.
The sequence of $U_{sk+j}$'s can then be extracted from the $m \times sb$ kernel matrix.
The resulting $s$-step BDCD algorithm for K-RR is shown in \Cref{alg:cabdcd}.
\begin{algorithm}
\caption{$s$-Step BDCD for Kernel Ridge Regression}
\label{alg:cabdcd}
\begin{algorithmic}[1]
\STATE \textbf{Input:} $A \in \mathbb{R}^{m \times n}, y \in \mathbb{R}^m, b \in \mathbb{Z}^+  s.t. b \leq m,$
\STATE $ s \in \mathbb{Z}^+, H \geq s, \alpha_0 \in \mathbb{R}^m,\mathcal{K} := A \times A \mapsto \mathbb{R}$
\FOR{\( k = 0, \ldots H/s, \)}
    \FOR{\( j = 1, 2, \ldots, s \)}
        \STATE Choose \( \{ i_l \in [m] \mid l = 1, 2, \ldots, b \} \) uniformly
        \STATE at random without replacement
        \STATE \( V_{sk+j} = \left[ e_{i_1}, e_{i_2}, \ldots, e_{i_{b}} \right] \in \mathbb{R}^{m \times b} \)
    \ENDFOR
    \STATE $\Omega_k = [V_{sk +1}, V_{sk + 2}, \ldots, V_{sk + s}] \in \mathbb{R}^{m \times sb}$
    \STATE $Q_k = \mathcal{K}( A, \Omega_k^\intercal A) \in \mathbb{R}^{m \times sb}$
    \FOR{\( j = 1, 2, \ldots, s \)}
        \STATE $\Psi = [e_{jb-b + 1}, \ldots e_{jb}] \in \mathbb{R}^{sb \times b}$
        \STATE $U_{sk+j} = Q_k \Psi$
         \STATE $G_{sk+j} = \frac{1}{\lambda}V_{sk+j}^\intercal U_{sk+j} + m I$
            \begin{align*}
                \Delta \alpha_{sk+j} &= G^{-1}_{sk+j} \biggr(V_{sk+j}^\intercal y - m V_{sk+j}^\intercal \alpha_{sk}\\
                 &- m \sum_{t = 1}^{j-1} V_{sk+j}^\intercal V_{sk+t}\Delta \alpha_{sk+t} \\
    		&- \frac{1}{\lambda}U_{sk+j}^\intercal \alpha_{sk} - \frac{1}{\lambda}\sum_{t = 1}^{j-1}U_{sk+j}^\intercal V_{sk+t}\Delta \alpha_{sk+t}\biggr)
            \end{align*}
    \ENDFOR
    \STATE \( \alpha_{sk+s} = \alpha_{sk} + \sum_{t=1}^s V_{sk+t} \Delta \alpha_{sk+t} \)
\ENDFOR
\STATE \textbf{Output} \( \alpha_{H} \)
\end{algorithmic}
\end{algorithm}
\section{Parallel Algorithms and Analysis}
In this section we analyze their computation and communication costs (\Cref{subsec:analysis}) using Hockney's performance model \cite{hockney}:
    $\gamma F + \beta W + \phi L,$
where $F, W$ and $L$ represent the algorithm costs for computation, bandwidth, and latency; and $\gamma, \beta$ and $\phi$ represent the associated hardware parameters.
Given the similarities between the coordinate descent methods for K-RR and K-SVM, we focus our analysis on \Cref{alg:bdcd,alg:cabdcd} which are blocked generalizations of DCD.
The leading-order costs of BDCD for K-RR can be specialized to DCD for K-SVM by setting the block size, $b = 1$.
Independent analyses of BDCD and DCD are required in order to bound constants, which we omit in this work.
\subsection{Computation and Communication Analysis}\label{subsec:analysis}
We assume that $A \in \mathbb{R}^{m \times n}$ is a sparse matrix with density $f$, where $0 < f \leq 1$, such that the non-zeros are uniformly distributed where each row contains $fn$ non-zero entries.
We further assume that the processors are load balanced with $fmn/P$ non-zeros per processor.
The K-SVM and K-RR problems require non-linear kernel operations which are often more expensive than floating-point arithmetic.
For example, the polynomial kernel requires a pointwise \texttt{pow} instruction for each entry of the sampled kernel matrix.
Similarly, the RBF kernel requires an \texttt{exp} instruction for each entry.
We model this overhead by introducing a scalar, $\mu$, to represent the cost of applying a non-linear function relative to floating-point multiplies during the kernel computation.
Note that due to the non-linear kernel function (particularly the RBF kernel), we store the $m \times b$ (sampled) kernel matrix in dense format.
Finally, the BDCD and $s$-step BDCD algorithms require an MPI Allreduce call at each iteration.
We assume that the Allreduce has a cost of $L = O(\log P)$ and $W = O(w)$ where $w$ is the message size (in words) \cite{mpicost}.
We begin by proving the parallel computation, communication, and storage costs of the BDCD algorithm (\Cref{thm:bdcd}) followed by analysis of the $s$-step BDCD algorithm (\Cref{thm:sstepbdcd}).
\begin{theorem}
    \label{thm:bdcd}
    Let \( H \) be the number of iterations of the Block Dual Coordinate Descent (BDCD) algorithm, $b$ the block size, $P$ the number of processors, and \( A \in \mathbb{R}^{m \times n} \) the matrix that is partitioned using 1D-column layout. Under this setting, BDCD has the following asymptotic costs along the critical path:
    \begin{align*}
        \textbf{Computation: } & \mathcal{O}\left(H \left(\frac{bfmn}{P} + \mu bm + b^3\right)\right) \text{ flops, }\\
        \textbf{Bandwidth: } & \mathcal{O}\left(H bm\right) \text{ words moved, }\\
        \textbf{Latency: } & \mathcal{O}\left(H \log P\right) \text{ messages,}\\
        \textbf{Storage: } & \mathcal{O}\left(bm + \frac{fmn}{P}\right) \text{ words of memory.}
    \end{align*}
\end{theorem}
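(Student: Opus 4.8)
The plan is to walk through a single iteration of the loop in \Cref{alg:bdcd}, charge each line to exactly one of the four cost categories (computation, bandwidth, latency, storage), and then multiply the per-iteration flop/word/message counts by $H$. The setup I would fix first is the data layout: under the 1D-column partition each processor owns an $m \times (n/P)$ slice of $A$ containing $fmn/P$ nonzeros, the solution $\alpha$ and labels $y$ are replicated ($m$ words each), and every operation must be classified as purely local or as requiring a reduction. The coordinate selection and the construction of the sampling matrix $V_k$ only pick $b$ indices and form an implicit $m \times b$ selector, so they cost $O(b)$ and no communication and are absorbed into lower-order terms.

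The crux is the kernel step $U_k = \mathcal{K}(A, V_k^\intercal A)$. Since $V_k^\intercal A$ merely extracts the $b$ sampled rows, each processor forms its $b \times (n/P)$ local slice without communication. The inner-product part common to the linear, polynomial, and RBF kernels is then the sparse-times-dense product of the local $m \times (n/P)$ block of $A$ against this slice, costing $O(b \cdot fmn/P)$ flops; applying the pointwise nonlinear function to the $m \times b$ result contributes $O(\mu bm)$ flops. Because the features are split across processors, each processor holds only a partial sum of every kernel entry, so a single \texttt{Allreduce} of the $m \times b$ matrix is required. This is the \emph{only} communication in the iteration, and it accounts for $O(bm)$ words of bandwidth and $O(\log P)$ latency.

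Once $U_k$ is replicated on every processor by the reduction, the remaining lines are performed redundantly with no further communication. Forming $G_k = \tfrac{1}{\lambda} V_k^\intercal U_k + mI$ extracts a $b \times b$ submatrix at cost $O(b^2)$; the right-hand side needs $V_k^\intercal y$ and $V_k^\intercal \alpha_{k-1}$ (each an $O(b)$ selection) together with the dense product $U_k^\intercal \alpha_{k-1}$ costing $O(bm)$, which is dominated by the $O(\mu bm)$ kernel term; solving the $b \times b$ system via a dense factorization costs $O(b^3)$; and the update $\alpha_k = \alpha_{k-1} + V_k \Delta\alpha_k$ touches only $b$ coordinates. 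Summing gives a per-iteration cost of $O(bfmn/P + \mu bm + b^3)$ flops, $O(bm)$ words, and $O(\log P)$ messages, and multiplying by $H$ yields the stated bounds. For storage, the dominant terms are the local block of $A$ ($fmn/P$ words) and the dense $m \times b$ kernel matrix ($bm$ words), with $\alpha$, $y$, and $G_k$ lower order since $b \le m$, giving $O(bm + fmn/P)$.

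The main obstacle is the kernel-computation step. I must argue carefully that the inner-product structure shared by all three kernels reduces to one sparse-dense multiplication of cost $O(bfmn/P)$ under the column layout (for the RBF kernel by precomputing row norms so that only the cross term $a_i a_j^\intercal$ is recomputed), that the pointwise nonlinearity is cleanly isolated into the $\mu bm$ term, and, most importantly, that a single reduction suffices because the column partition makes every kernel entry a sum of $P$ local partial products. Everything else is routine charging of BLAS-1/2/3 kernels, but pinning down exactly one \texttt{Allreduce} per iteration is what fixes the latency and bandwidth bounds.
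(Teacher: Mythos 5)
Your proposal is correct and follows essentially the same route as the paper's proof: the sparse kernel product at $O(bfmn/P)$ flops, a single Allreduce of the $m \times b$ partial kernel matrix (performed before the pointwise nonlinearity, since each kernel entry is a sum of $P$ partial inner products) giving $O(bm)$ words and $O(\log P)$ messages, redundant local work of $O(\mu bm + bm + b^3)$ flops, multiplication by $H$, and storage dominated by the local slice of $A$ and the dense $m \times b$ kernel matrix. Your added remark about precomputing row norms for the RBF kernel matches how the paper's implementation handles it, and does not change the counts.
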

\begin{proof}
    Each iteration of parallel BDCD for K-RR begins by computing $U_k = \mathcal{K}(A, V_k^\intercal A)$, which costs at most $\frac{bfmn}{P}$ flops, in parallel, to form the $m \times b$ sampled columns, $AA^TV_k$, of the $m \times m$ full kernel matrix, $AA^T$.
    Each processor forms the $m \times b$ partial kernel matrix which must be sum-reduced before applying the non-linear kernel operation.
    This requires $bm$ words to be moved and $\log P$ messages.
    After communication, the non-linear kernel operation can be applied to each entry of the kernel matrix using $\mu bm$ flops.
    The quantity $G_k$ can be extracted directly from $U_k$ by selecting the $b$ rows, corresponding to $V_k^\intercal$, sampled for this iteration.
    Since each processor redundantly stores $y$ and $\alpha$, the vector quantity $V_k^\intercal y - m V_k^\intercal - \frac{1}{\lambda}U_k^\intercal\alpha_{k-1}$ can be computed independently in parallel on each processor.
    This operation requires $bm$ flops.
    Once the vector quantity is computed, the linear system can be solved in $b^3$ flops to compute $\Delta \alpha_k$ redundantly on each processor.
    Finally, $\alpha_k$ can be updated using $b$ flops by updating only the coordinates of $\alpha$ sampled in this iteration.
    Summing and multiplying the above costs by $H$, the total number of BDCD iteration, proves the computational and communication costs.
    The storage costs can be obtained by noticing that each iteration of BDCD must simultaneously store $A$ and $U_k$ in memory, which requires $\frac{fmn}{P}$ words and $bm$ respectively.
    Each iteration also requires $G_k$, which requires $b^2$ words of storage.
    However, storing $G_k$ and other vector quantities are low-order terms in comparison to storing $A$ and $U_k$.
\end{proof}
\begin{theorem}
    \label{thm:sstepbdcd}
    Let \( H \) be the number of iterations of the $s$-Step Block Dual Coordinate Descent ($s$-Step BDCD) algorithm, $b$ the block size, $P$ the number of processors, and \( A \in \mathbb{R}^{m \times n} \) the matrix that is partitioned using 1D-column layout. Under this setting, $s$-Step BDCD has the following asymptotic costs along the critical path:
    \begin{align*}
        \textbf{Computation: } & \mathcal{O}\left(\frac{H}{s} \left(\frac{sbmfn}{P} + \mu sbm + sb^3 + \binom{s}{2}b^2\right) \right) \text{ flops, }\\
        \textbf{Bandwidth: } & \mathcal{O}\left(\frac{H}{s} sbm\right) \text{ words moved, }\\
        \textbf{Latency: } & \mathcal{O}\left(\frac{H}{s} \log P\right) \text{ messages, }\\
        \textbf{Storage: } & \mathcal{O}\left(\frac{fmn}{P} + sbm\right) \text{ words of memory.}
    \end{align*}
\end{theorem}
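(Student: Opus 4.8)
The plan is to mirror the structure of the proof of \Cref{thm:bdcd}, adapting the per-iteration accounting to the batched structure of \Cref{alg:cabdcd}. The essential observation is that the outer loop executes $H/s$ times, and within each outer iteration all $s$ kernel-dependent quantities are formed upfront by a single batched kernel evaluation, which is precisely what defers communication. I would therefore separate the analysis into two parts: the cost of the upfront batched kernel computation (incurred once per outer iteration), and the cost of the inner loop that sequentially solves the $s$ subproblems using correction terms.

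First I would analyze the batched kernel computation $Q_k = \mathcal{K}(A, \Omega_k^\intercal A)$. Since $\Omega_k \in \mathbb{R}^{m \times sb}$ samples $sb$ coordinates, forming the partial kernel matrix $AA^\intercal \Omega_k$ under the 1D-column layout costs $\tfrac{sbfmn}{P}$ flops in parallel, exactly a factor of $s$ more columns than the single-iteration cost in \Cref{thm:bdcd}. This partial $m \times sb$ result must be sum-reduced before the non-linear kernel is applied, which requires moving $sbm$ words with $\log P$ messages in a \emph{single} Allreduce, after which the pointwise kernel operation costs $\mu sbm$ flops. The key point, which distinguishes this from simply running $s$ BDCD iterations, is that this one communication serves all $s$ inner iterations, so multiplying by $H/s$ outer iterations yields $\tfrac{H}{s}\log P$ messages and $Hbm$ words rather than $H\log P$ messages.

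Next I would analyze the inner loop over $j = 1,\ldots,s$. Each $U_{sk+j} = Q_k\Psi$ is extracted from $Q_k$ by column selection, and $G_{sk+j}$ by row selection, so neither requires kernel recomputation or communication. The subproblem solution in \eqref{eq:bdcd-sstep} decomposes into: the base term $U_{sk+j}^\intercal\alpha_{sk}$ costing $bm$ flops; the $b\times b$ linear solve against $G_{sk+j}$ costing $b^3$ flops; and the correction terms $\sum_{t=1}^{j-1}(\cdots)\Delta\alpha_{sk+t}$. Here the cross-products $V_{sk+j}^\intercal V_{sk+t}$ and $U_{sk+j}^\intercal V_{sk+t}$ are each $b\times b$ and obtained by selection from already-available data, so each correction contributes $O(b^2)$ flops per term. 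Summing over the inner loop, the base terms total $sbm$, the linear solves total $sb^3$, and the correction terms total $b^2\sum_{j=1}^{s}(j-1) = \binom{s}{2}b^2$.

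The main obstacle is the correct accounting of this correction-term accumulation: I must argue that the cross-products reduce to $b \times b$ selections (incurring neither kernel work nor communication) and that the triangular sum over the inner loop produces the $\binom{s}{2}b^2$ term rather than a term that scales with $m$. Once this is established, collecting the per-outer-iteration flops as $\tfrac{sbfmn}{P} + \mu sbm + sb^3 + \binom{s}{2}b^2$ and multiplying by $H/s$ gives the stated computation bound, while the communication bounds follow from the single batched Allreduce. The storage bound follows by observing that each outer iteration must simultaneously hold $A$, costing $\tfrac{fmn}{P}$ words, and the $m \times sb$ matrix $Q_k$, costing $sbm$ words; the per-block quantities $G_{sk+j}$ and the $\Delta\alpha_{sk+t}$ vectors are low-order by comparison.
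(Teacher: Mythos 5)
Your proposal is correct and follows essentially the same route as the paper's own proof: a single batched kernel computation $Q_k$ with one Allreduce per outer iteration (giving $\tfrac{sbfmn}{P}$ flops, $sbm$ words, $\log P$ messages), an inner loop accounting of $sbm$ flops for the base matrix-vector products, $sb^3$ for the solves, and $\binom{s}{2}b^2$ for the correction terms, all multiplied by $H/s$, with storage dominated by $A$ and $Q_k$. The only addition beyond the paper's text is your explicit justification that the cross-products reduce to $b \times b$ selections, which the paper asserts implicitly; no gap remains.
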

\begin{proof}
    The $s$-Step BDCD algorithm for K-RR computes a factor of $s$ larger kernel matrix, $Q_k = \mathcal{K}(A, \Omega_kA)$, which costs at most $\frac{sbfmn}{P}$ flops, in parallel, to partially form.
    The partial $Q_k$'s on each processor must be sum-reduced prior to applying the non-linear kernel operation, which requires $sbm$ words to be moved and $\log P$ messages.
    Once $Q_k$ is sum-reduced the non-linear kernel computation can be performed redundantly on each processor.
    Since the non-linear operation is required for each entry of $Q_k \in \mathbb{R}^{m \times sb}$, forming the sampled kernel matrix costs $\mu sbm$ flops.
    As with parallel BDCD, the $b \times b$ matrices $G_{sk+j}$ for $j \in \{1,2,\ldots,s\}$ can be extracted from $Q_k$ and is a low-order cost.
    However, the $s$-Step BDCD algorithm requires additional matrix-vector computations to form the right-hand side of the linear system.
    Forming the right-hand side requires a total of $s$ $U_{sk+j}^\intercal \alpha_{sk}$ matrix-vector computations which costs $sbm$ flops and a sequence of matrix-vector computations, $U_{sk+j}^\intercal V_{sk+t}\Delta \alpha_{sk+t}$ for $t \in \{1, \ldots, j-1\}$, to correct the right-hand side.
    There are a total of $\binom{s}{2}$ such corrections with each requiring $b^2$ flops.
    Once the corrections have been performed, $s$ linear systems can be solved using $s b^3$ flops.
    The leading-order cost of the inner loop (indexed by $j$) is $\binom{s}{2}b^2 + sbm + sb^3$ flops.
    Once the sequence of $s$ $\Delta \alpha_{sk + j}$ vectors have been computed, $\alpha_{sk+s}$ can be computed by summing with appropriate indices of $\alpha_{sk}$, which requires $sb$ flops.
    The $s$-Step BDCD computes $s$ solutions every outer iteration, so a total of $H/s$ outer iterations are required to perform the equivalent of $H$ BDCD iterations.
    Multiplying these costs by $H/s$ proves the computation and communication costs.
    Finally, this algorithm requires storage of $A, Q_k, G_{sk+j}$, and several vector quantities.
    However, the leading order costs are the storage of $A$ and $Q_k$ which costs $\frac{fmn}{P} + sbm$ words.
\end{proof}
\section{Experiments}
\begin{table}[t]
    \centering
    \begin{tabular}{|c|c|c|c|c|}
        \hline
        Name & Type & $m$ & $n$ \\
        \hline
        duke breast-cancer & binary classification & 44 & 7129  \\
        \hline
        diabetes & binary classification & 768 & 8 \\
        \hline
        abalone & regression & 4177 & 8  \\
        \hline
        bodyfat & regression & 252 & 14 \\
        \hline
    \end{tabular}
    \caption{Datasets used in the convergence experiments.}
    \label{table:matlab_datasets}
\end{table}
\begin{figure*}
    \centering
    \setkeys{Gin}{width=1\linewidth}
    \begin{subfigure}[t]{0.3\textwidth}
      \includegraphics{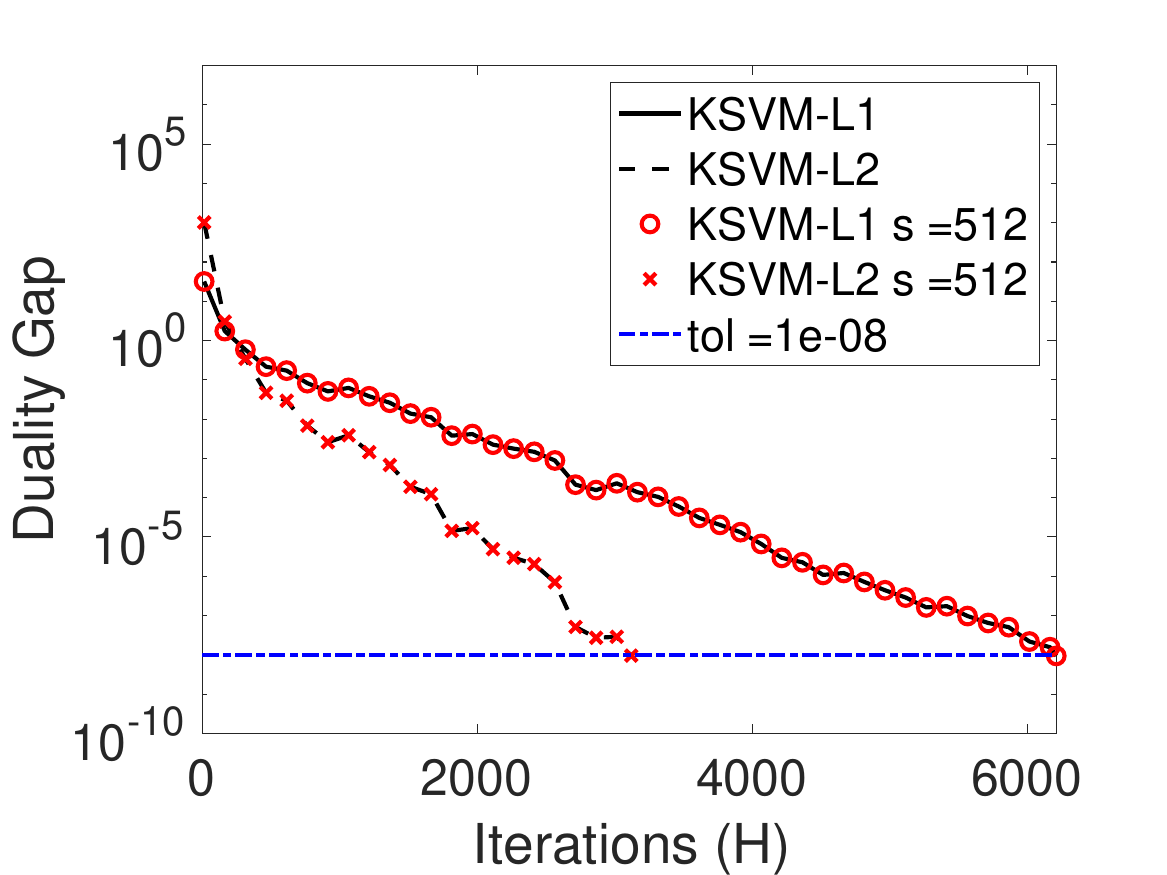}
      \caption{duke, linear}
    \end{subfigure}\hfill
    \begin{subfigure}[t]{0.3\textwidth}
      \includegraphics{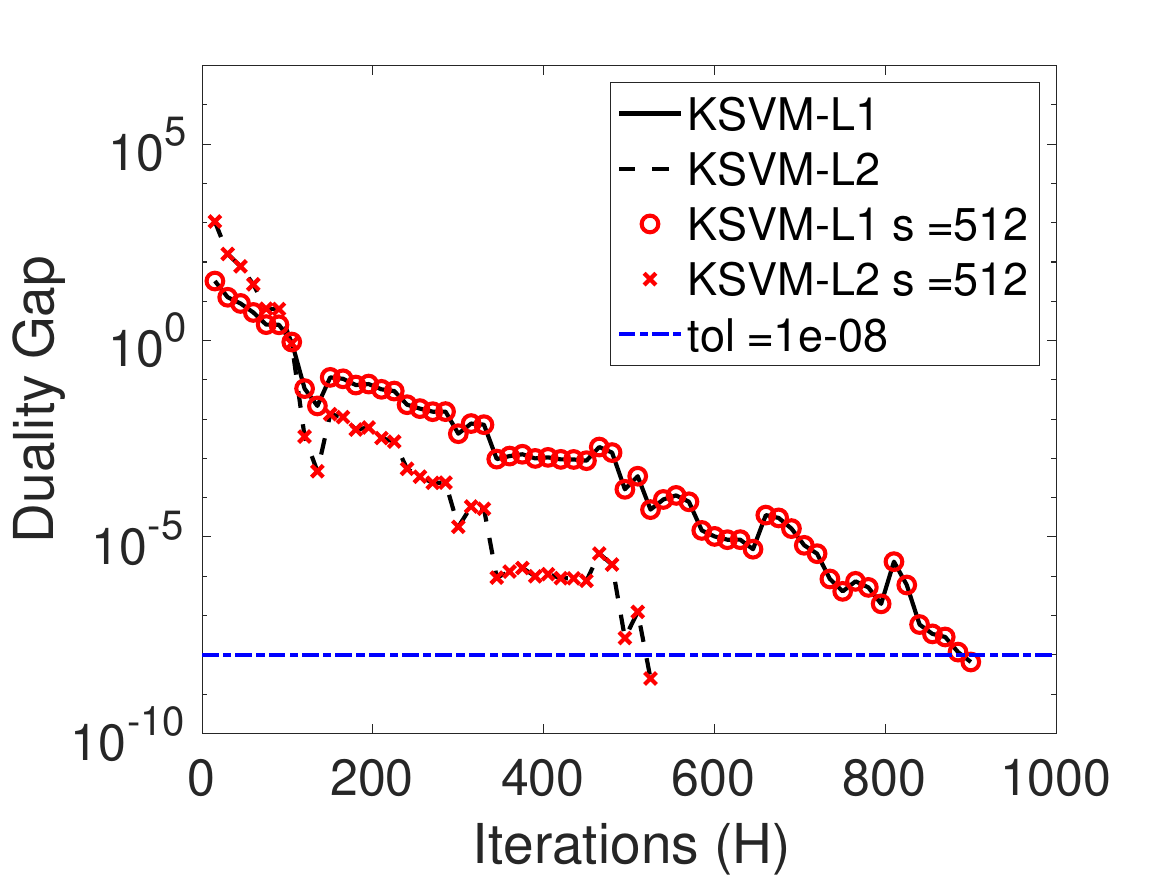}
      \caption{duke, poly}
    \end{subfigure}\hfill
    \begin{subfigure}[t]{0.3\textwidth}
      \includegraphics{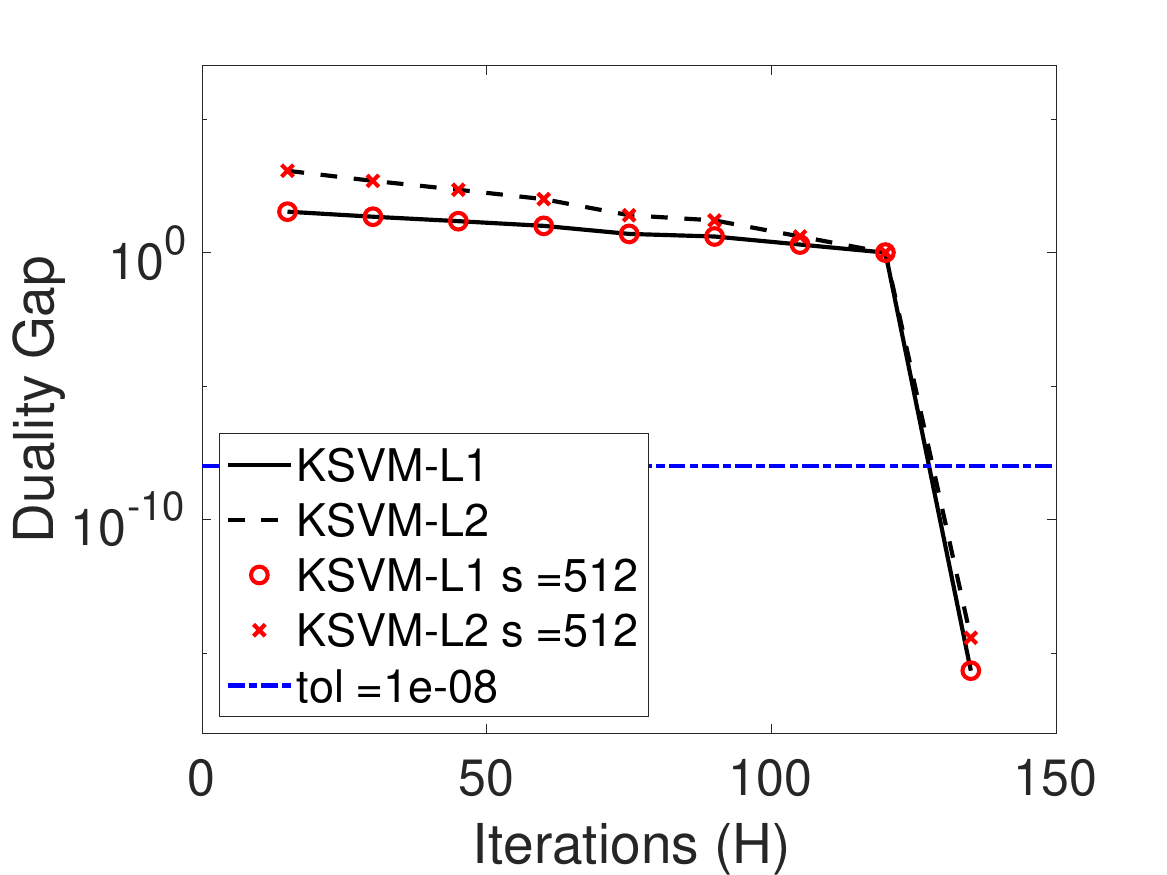}
      \caption{duke, gauss}
    \end{subfigure}\hfill
    \begin{subfigure}[t]{0.3\textwidth}
      \includegraphics{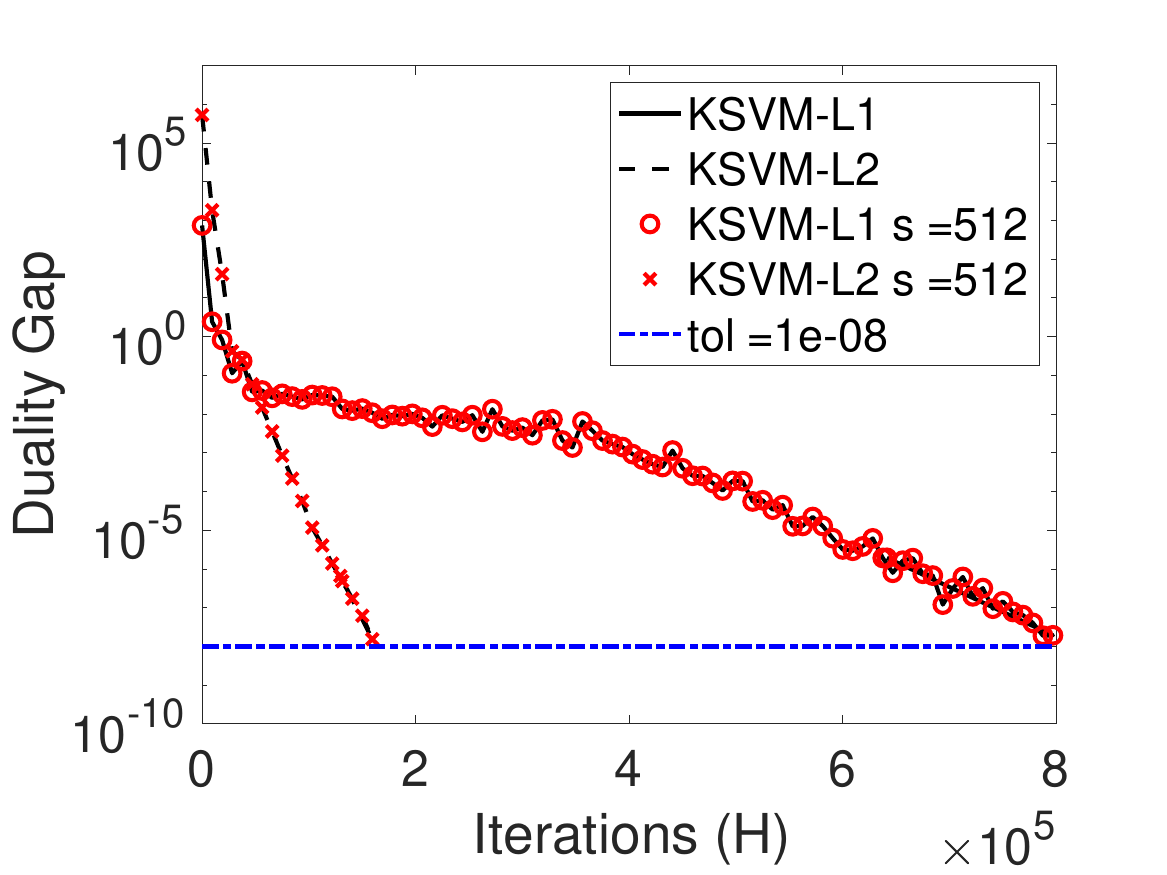}
      \caption{diabetes, linear}
    \end{subfigure}\hfill
    \begin{subfigure}[t]{0.3\textwidth}
      \includegraphics{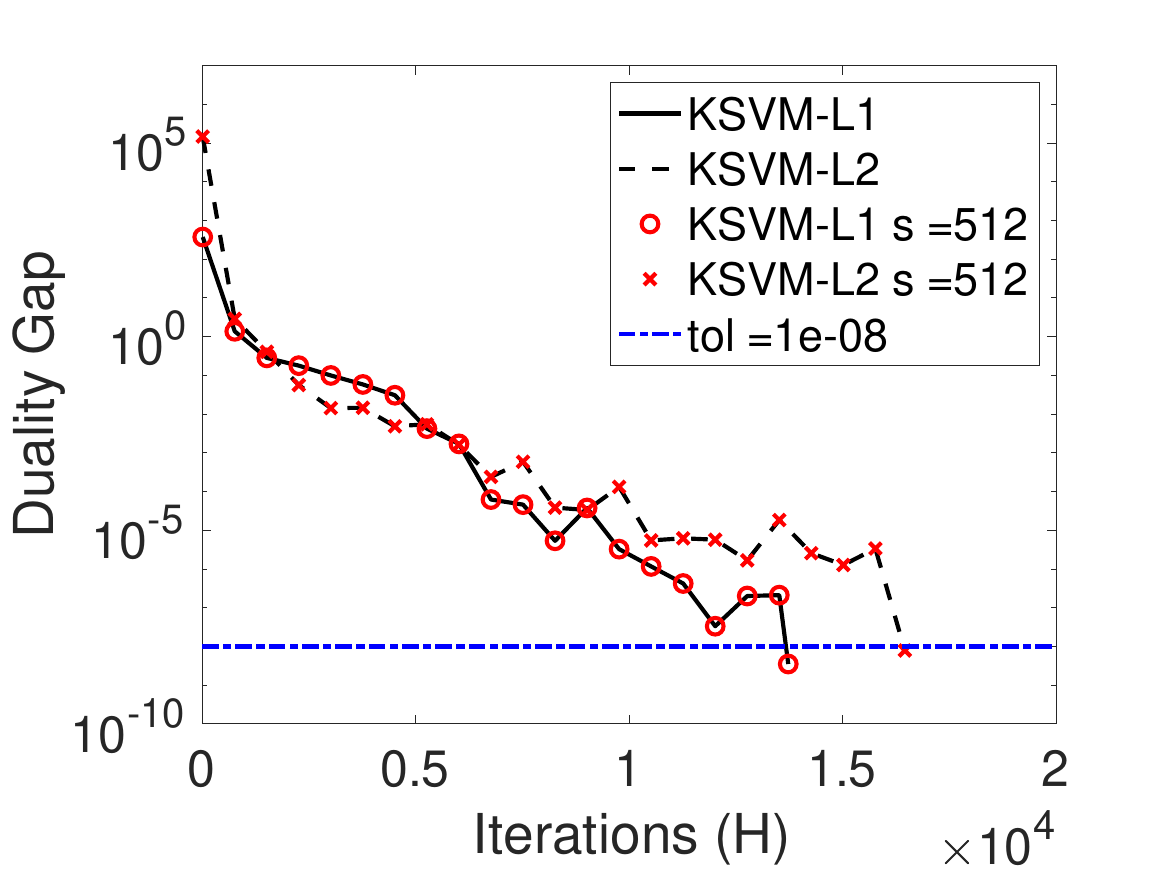}
      \caption{diabetes, poly}
    \end{subfigure}\hfill
    \begin{subfigure}[t]{0.3\textwidth}
      \includegraphics{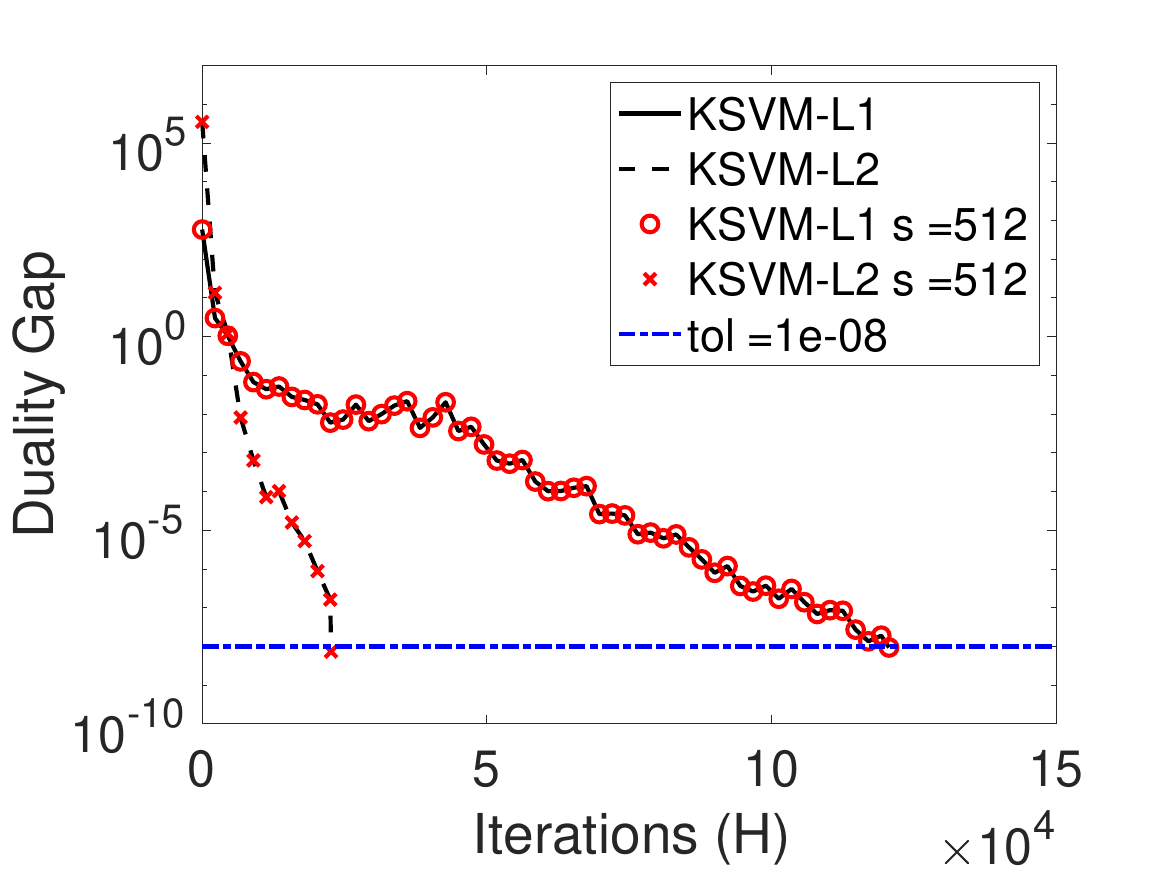}
      \caption{diabetes, gauss}
    \end{subfigure}
    \caption{Comparison of DCD and $s$-step DCD convergence behavior for K-SVM-L1 and K-SVM-L2 problems.}
    \label{fig:convergence-ksvm}
  \end{figure*}
This section presents the numerical and performance experiments of the $s$-step DCD and BDCD algorithms for K-SVM and K-RR, respectively.
Prior work on $s$-step Krylov methods \cite{Hoemmen_2010,Carson_2015} showed that the additional computation led to numerical instability due to the Krylov basis becoming rank-deficient.
The $s$-step DCD and BDCD algorithms require computation of factor of $s$ larger kernel matrix and requires gradient correction due to the deferred update on $\alpha$.
We begin by studying the convergence behavior of the $s$-step methods relative to the classical DCD and BDCD methods on several binary classification and regression datasets.
Then, we show the performance trade off and scaling behavior of the high-performance, distributed-memory implementations (written in C and MPI) of the $s$-Step DCD and BDCD methods relative to the standard DCD and BDCD methods on a Cray EX cluster.
\subsection{Convergence Experiments}\label{subsec:convergence}
We measure the convergence behavior of the $s$-step methods against the classical methods and perform ablation studies by varying the values of $s$, kernel choice (shown in \Cref{tbl:kernels}), and the block size (for the K-RR problem).
The datasets used in the experiments were obtained from the LIBSVM repository \cite{libsvm} whose properties are shown in \Cref{table:matlab_datasets}.
All algorithms were implemented and tested in MATLAB R2022b Update 7 on a MacBook Air 2020 equipped with an Apple M1 chip.
We measure the convergence of the K-SVM problem by plotting the duality gap for each setting of the $s$-step DCD method in comparison to the classical DCD method.
Since K-SVM is a convex problem, we should expect the duality gap to approach machine precision.
However, in these experiments we set the duality gap tolerance to $10^{-8}$.
We plot the duality gap over $H$ iterations until the gap converges to the specified tolerance.
Duality gap is defined as follows, $D(x) - P(x)$,
where the $D(x)$ refers to the objective value of the K-SVM problem (Lagrangian dual problem) and $P(x)$ which refers to the objective value of the primal K-SVM problem as computed by LIBSVM \cite{DCD_Linear_SVM,libsvm}.
The primal and dual SVM problems are defined in \eqref{eq:softsvm} and \eqref{eq:krr-problem}, respectively.
Note that we report convergence for both the K-SVM-L1 and (smoothed) K-SVM-L2 problems.
\Cref{fig:convergence-ksvm} shows the convergence behavior (in terms of duality gap) of the DCD and $s$-step DCD methods on the datasets in \Cref{table:matlab_datasets} and the kernels described in \Cref{tbl:kernels}.
We  polynomial kernel utilizes a degree $d = 3$ and $c = 0$.
The RBF kernel utilizes $\sigma = 1$.
We can observe for both datasets and all kernels, that the $s$-step DCD methods for K-SVM-L1 and K-SVM-L2 (red markers) exhibit the same convergence behavior as the DCD methods and attain the same solution, $\alpha_H$, up to machine precision.
Note that we expect the convergence behavior of K-SVM-L1 and K-SVM-L2 to differ since they solve different problems and attain different solutions.
We also perform experiments for the K-RR problem shown in \eqref{eq:krr-problem}.
Since K-RR has a closed-form solution, we use relative solution error to illustrate convergence behavior of BDCD and $s$-step BDCD.
The relative solution error is defined as follows,
${\Vert \alpha_k - \alpha^*\Vert_2}/{\Vert \alpha^* \Vert_2}$,
where $\alpha_k$ is the partial solution at iteration $k$ for BDCD (or iteration $sk + j$ for $s$-step BDCD) and $\alpha^*$ is the optimal solution obtained via matrix factorization.
In order to compute $\alpha^*$ we first compute the full, $m \times m$ kernel matrix and solve the linear system for $\alpha^*$.
Figure 2 compares the convergence behavior of BDCD and CA-BDCD for $s < 1$ with all three kernels.
\begin{figure*}
  \centering
  \setkeys{Gin}{width=1\linewidth}
  \begin{subfigure}[t]{0.27\textwidth}
    \includegraphics {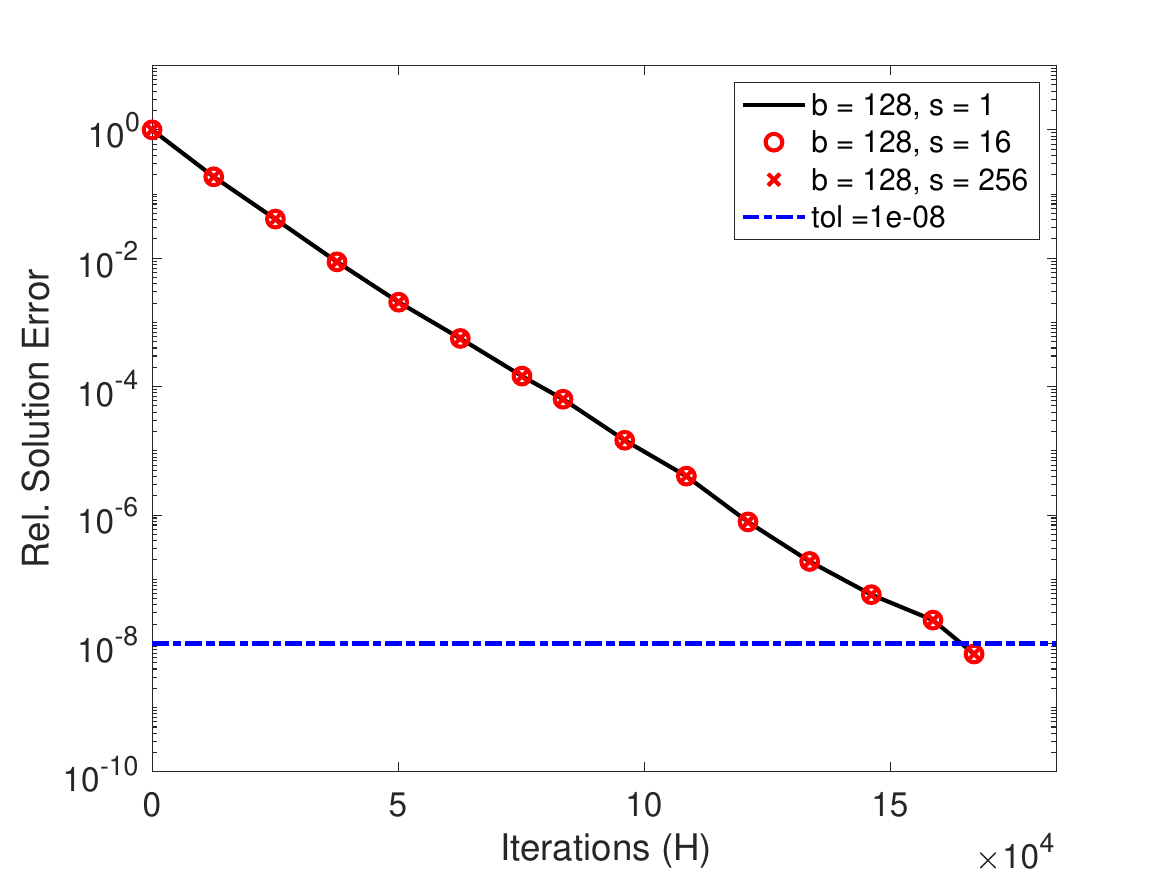}
    \caption{abalone, linear}
  \end{subfigure}\hfill
  \begin{subfigure}[t]{0.27\textwidth}
    \includegraphics {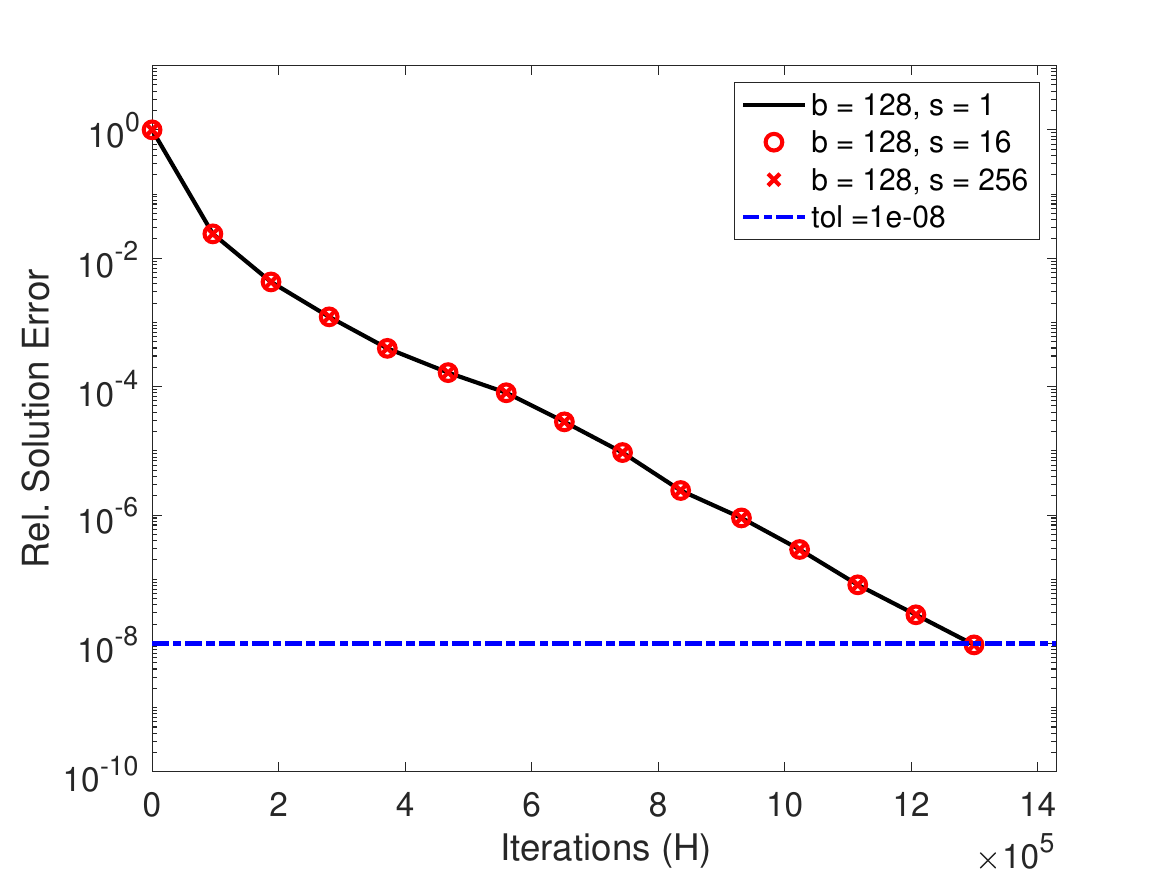}
    \caption{abalone, poly}
  \end{subfigure}\hfill
  \begin{subfigure}[t]{0.27\textwidth}
    \includegraphics {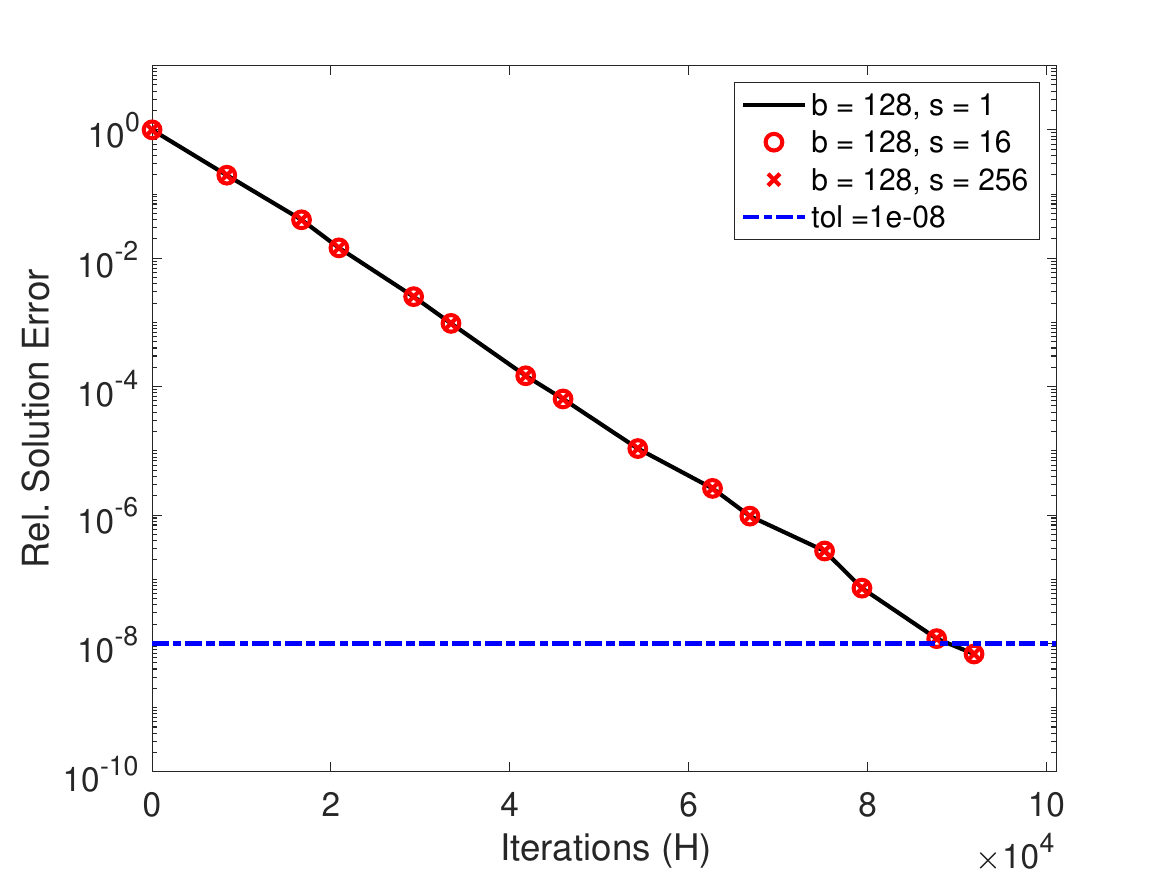}
    \caption{abalone, gauss}
  \end{subfigure}\hfill
  \begin{subfigure}[t]{0.27\textwidth}
    \includegraphics {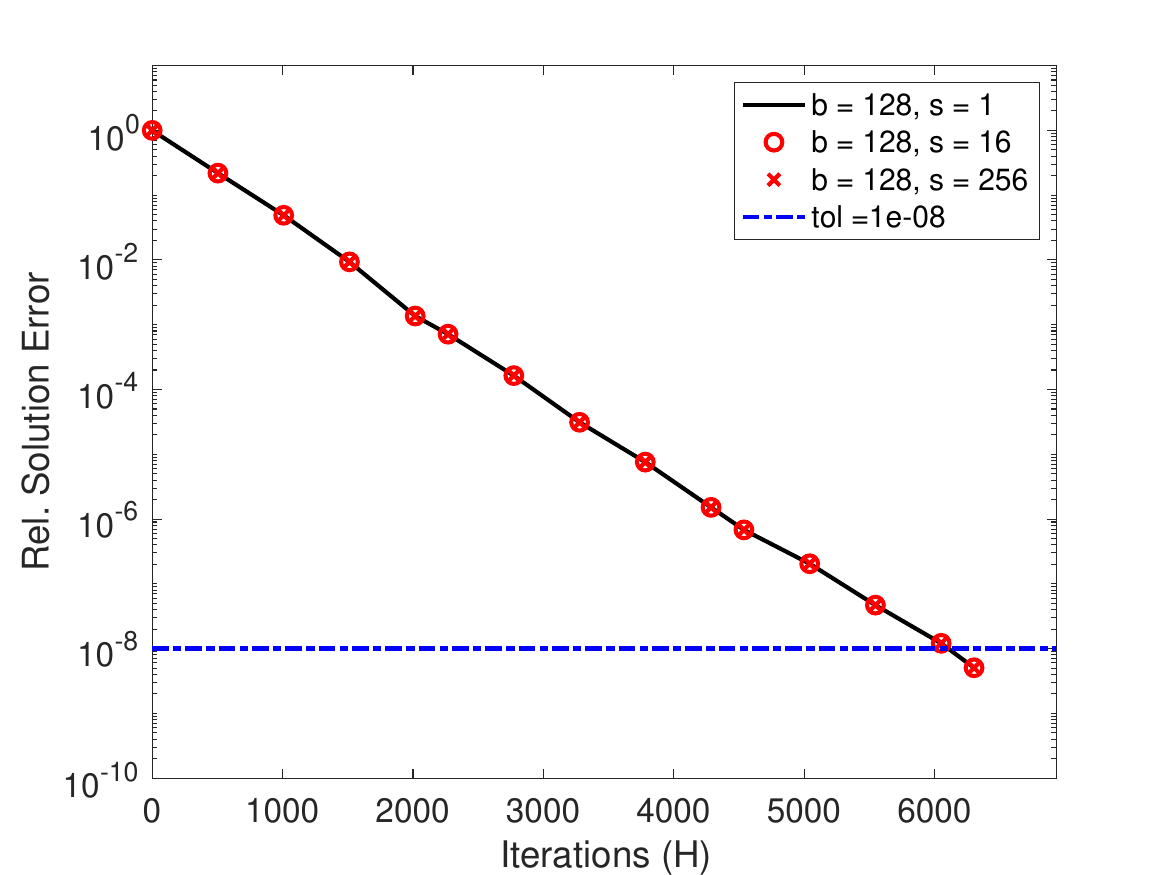}
    \caption{bodyfat, linear}
  \end{subfigure}\hfill
  \begin{subfigure}[t]{0.27\textwidth}
    \includegraphics {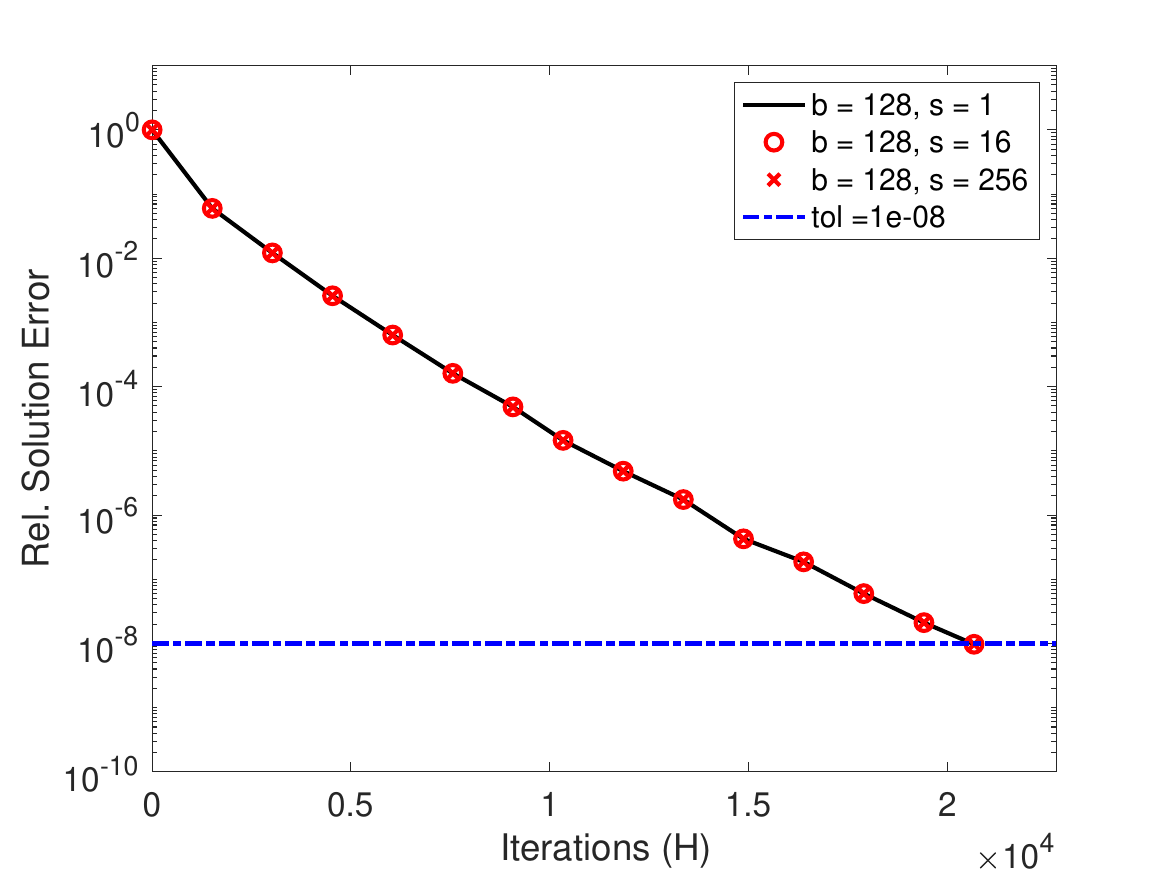}
    \caption{bodyfat, poly}
  \end{subfigure}\hfill
  \begin{subfigure}[t]{0.27\textwidth}
    \includegraphics {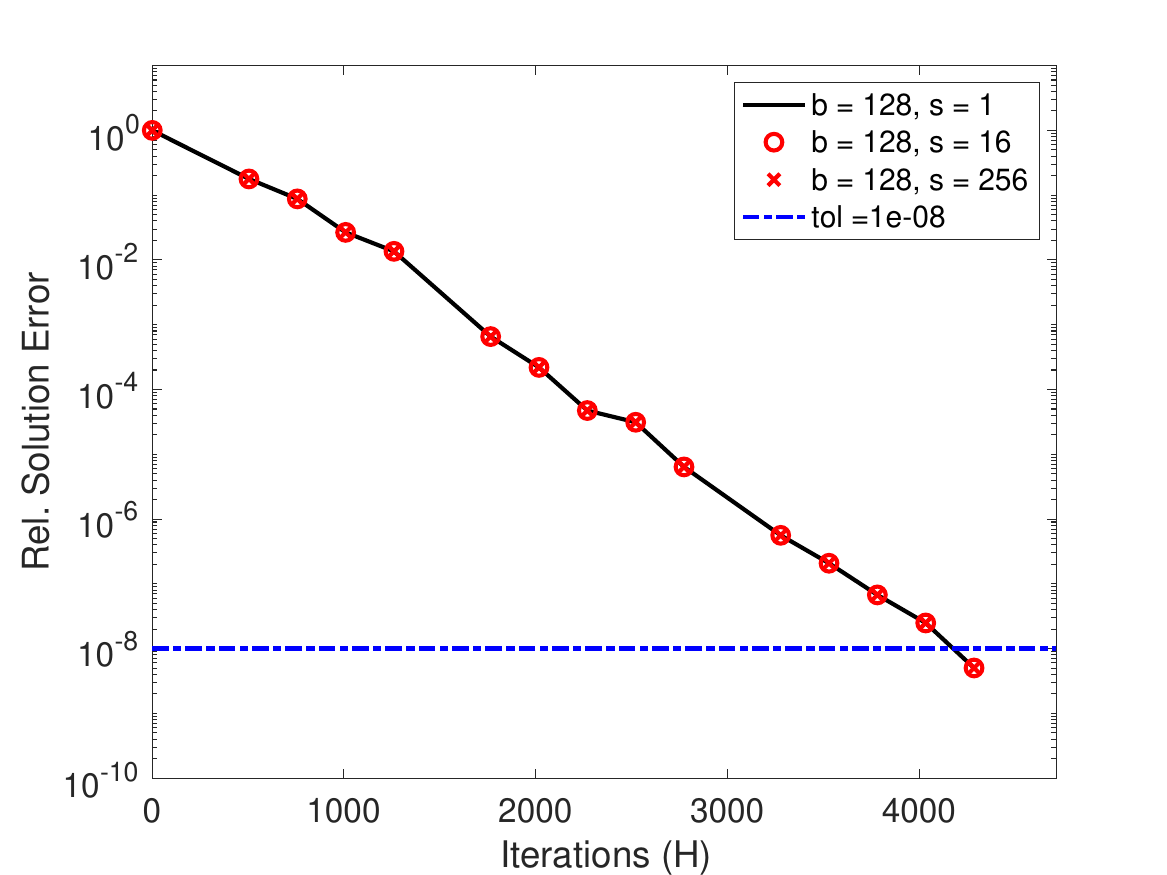}
    \caption{bodyfat, gauss}
  \end{subfigure}\hfill
  \caption{Comparison of BDCD and $s$-step BDCD convergence behavior for K-RR problem.}
  \label{fig:convergence-krr}
\end{figure*}
\Cref{fig:convergence-krr} shows the convergence behavior for the regression datasets in \Cref{table:matlab_datasets} for the three kernels in \Cref{tbl:kernels} for a large setting of $b$ and $s$.
Similar to K-SVM, we observe that $s$-step BDCD attains the same solution at BDCD for every iteration and converges to a relative error tolerance of $10^{-8}$ (single-precision accuracy).
The abalone dataset is the largest MATLAB dataset tested, so we set the block size to $b = 128$.
We report two settings for $s$ to study the convergence behavior at small and large values, where $s = 16$ and $s = 256$, respectively.
We use a smaller block size, $b = 64$, for the smaller bodyfat dataset and use the same settings for $s$.
As \Cref{fig:convergence-krr} illustrates, $s$-step BDCD is numerically stable even when $b \gg 1$.
Furthermore, we can observe that the convergence of $s = 256$ and $s = 16$ (red markers) match for both datasets tested.
Since K-RR, in particular, can utilize $b \gg 1$, we can expect the $s$-step BDCD method to achieve smaller performance gains over classical BDCD due to the bandwidth-latency trade off exhibited by the $s$-step methods.
So we should expect the $s$-step methods to be numerically stable for smaller values of $b$.
\Cref{fig:convergence-ksvm,fig:convergence-krr} both illustrate that the $s$-step methods are numerically stable for the tested ranges and are practical for many machine learning applications.
As a result, the maximum value of $s$ depends on the computation, bandwidth, and latency trade off for a given dataset and hardware parameters of a candidate parallel cluster.
\subsection{Performance Experiments}
In this section, we study the performance of the $s$-step methods for K-SVM and K-RR problems.
We implement all algorithms in C using Intel MKL for sparse and dense BLAS routines and MPI for distributed-memory parallel processing.
We show performance results for each of the kernels shown in \Cref{tbl:kernels}.
The linear and polynomial kernels utilize the Intel MKL SparseBLAS library for sparse GEMM computations.
The polynomial kernel with degree $d$ requires additional elementwise operations on the output of the sparse GEMM.
We compute the RBF kernel by using the definition of the dot-product to expand $\Vert a_{i,:} - a_{j,:}\Vert_2^2$ into a sparse GEMM computation.
Finally, we use the $-O2$ compiler optimization flag additional performance optimization.
The datasets for the performance experiments are shown in \Cref{table:datasets}.
We use datasets obtained from the LIBSVM repository and synthetic sparse dataset in order to study performance characteristics on benchmark machine learning datasets (which may not be load-balanced) and a perfectly load-balanced, sparse synthetic matrix.
All datasets are processed in compressed sparse row (CSR) format.
We use a Cray EX cluster to perform experiments.
We run the DCD and BDCD methods for a fixed number of iterations and vary $s$ for the $s$-step DCD and BDCD methods.
\begin{table}[t]
    \centering
    \begin{tabular}{|c|c|c|c|c|}
        \hline
        Dataset & $m$ & $n$ & nnz & Sparsity \\
        \hline
        colon-cancer& 62 & 2,000 & 124000 & 0 \\
        \hline
        duke breast-cancer & 44 & 7,129 & 313676 & 0 \\
        \hline
        synthetic & 2,000 & 800,000 & 16,000,000 & 99\% \\
        \hline
        news20.binary & 19,996 & 1,355,191 & 909,7916 & 99.97\% \\
        \hline
    \end{tabular}
    \caption{Datasets used in the performance experiments.}
    \label{table:datasets}
\end{table}
\begin{figure*}
    \centering
    \setkeys{Gin}{width=1\linewidth}
    \begin{subfigure}[t]{0.32\textwidth}
      \includegraphics{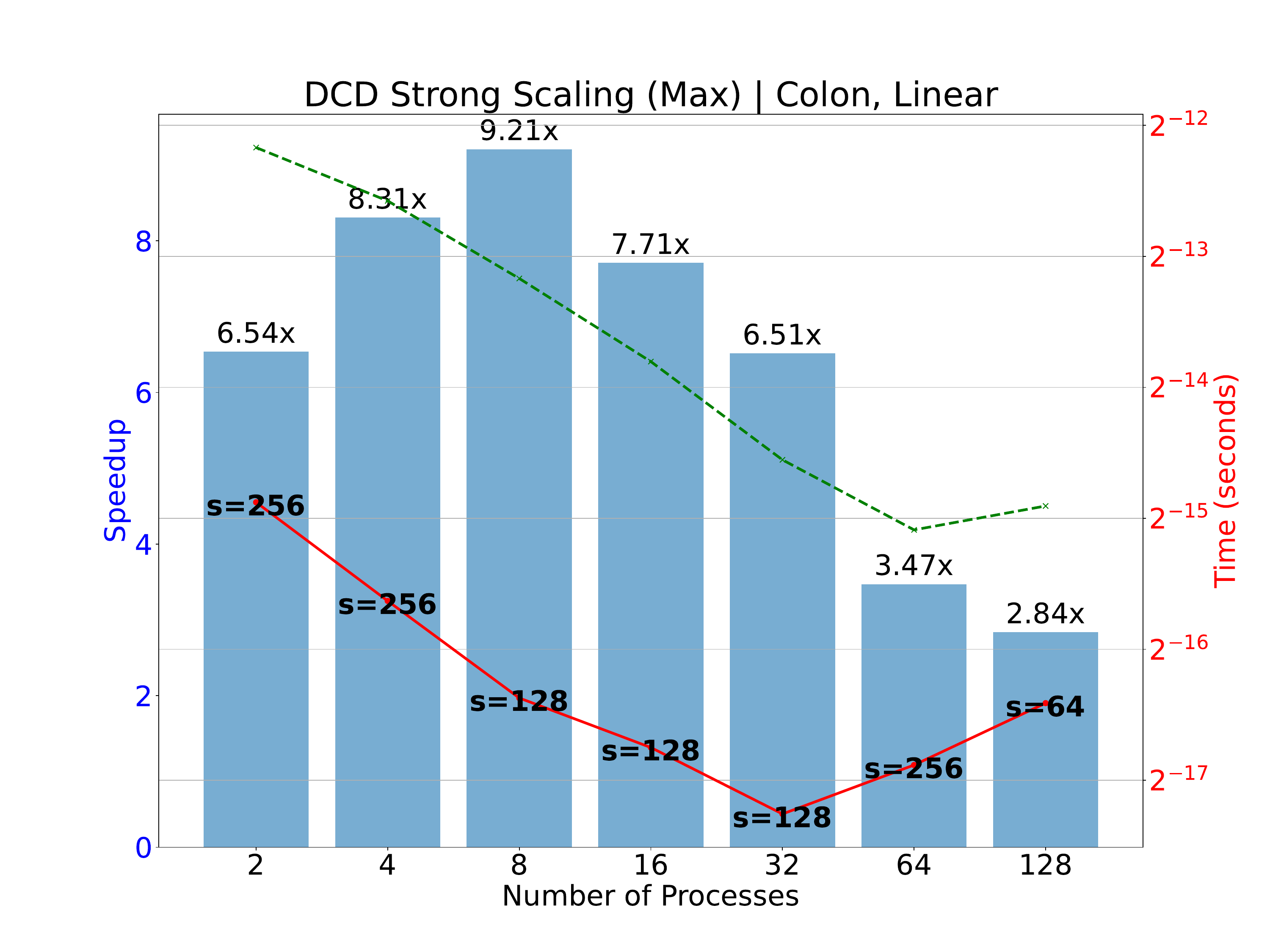}
      \caption{colon cancer, linear}
    \end{subfigure}\hfill
    \begin{subfigure}[t]{0.32\textwidth}
      \includegraphics{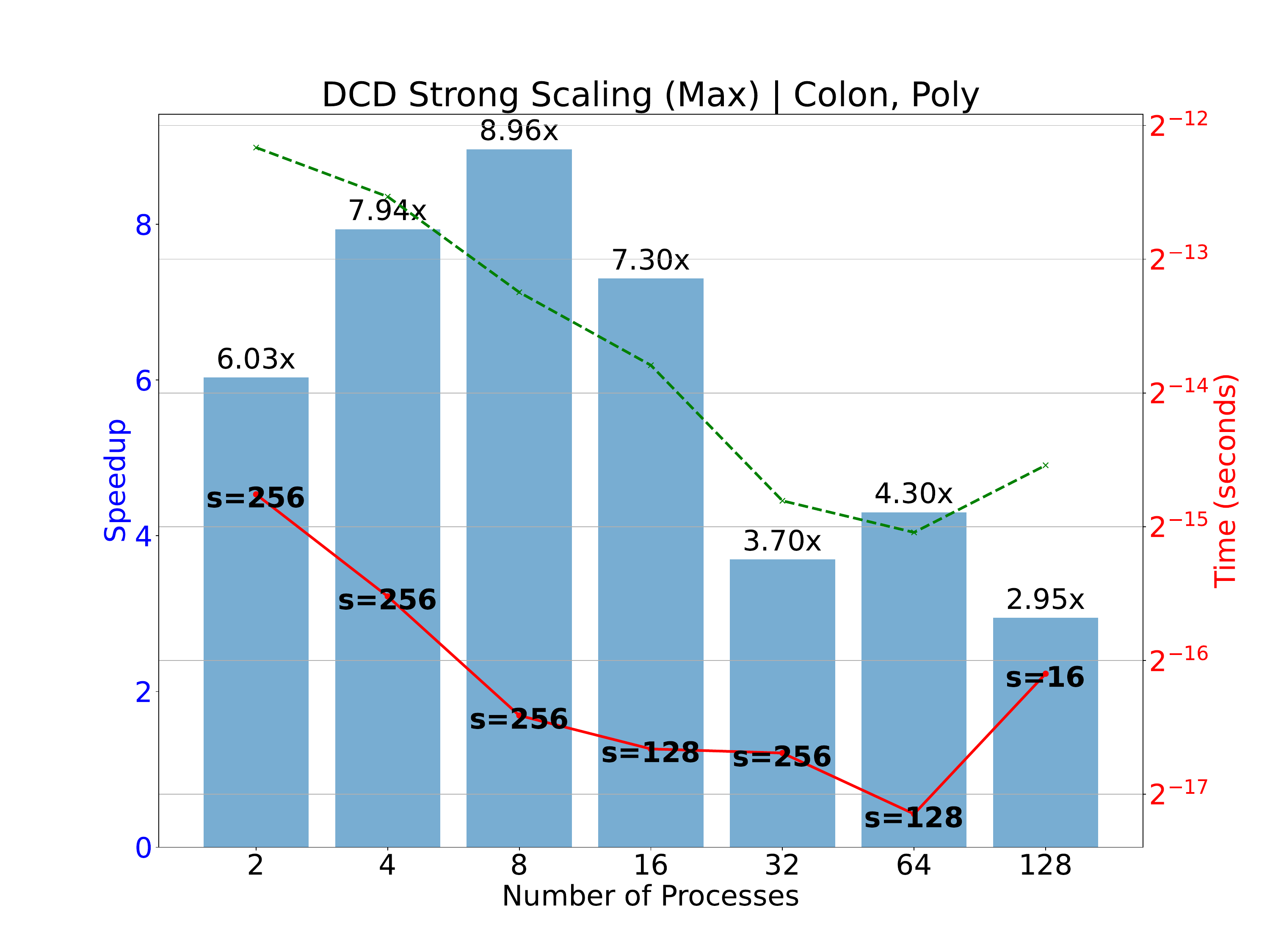}
      \caption{colon cancer, polynomial}
    \end{subfigure}\hfill
    \begin{subfigure}[t]{0.32\textwidth}
      \includegraphics{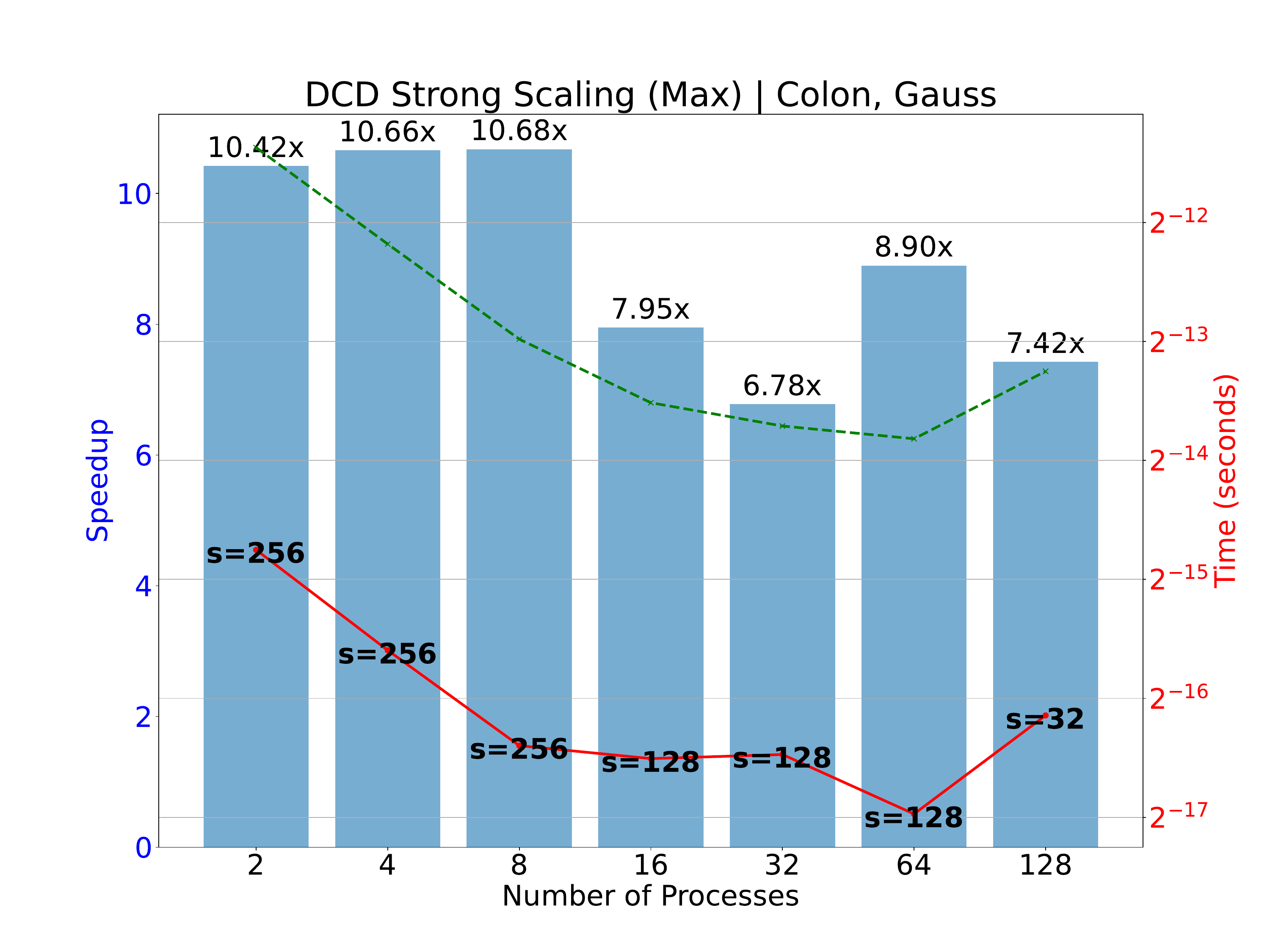}
      \caption{colon cancer, RBF}
    \end{subfigure}\hfill
    \begin{subfigure}[t]{0.32\textwidth}
      \includegraphics{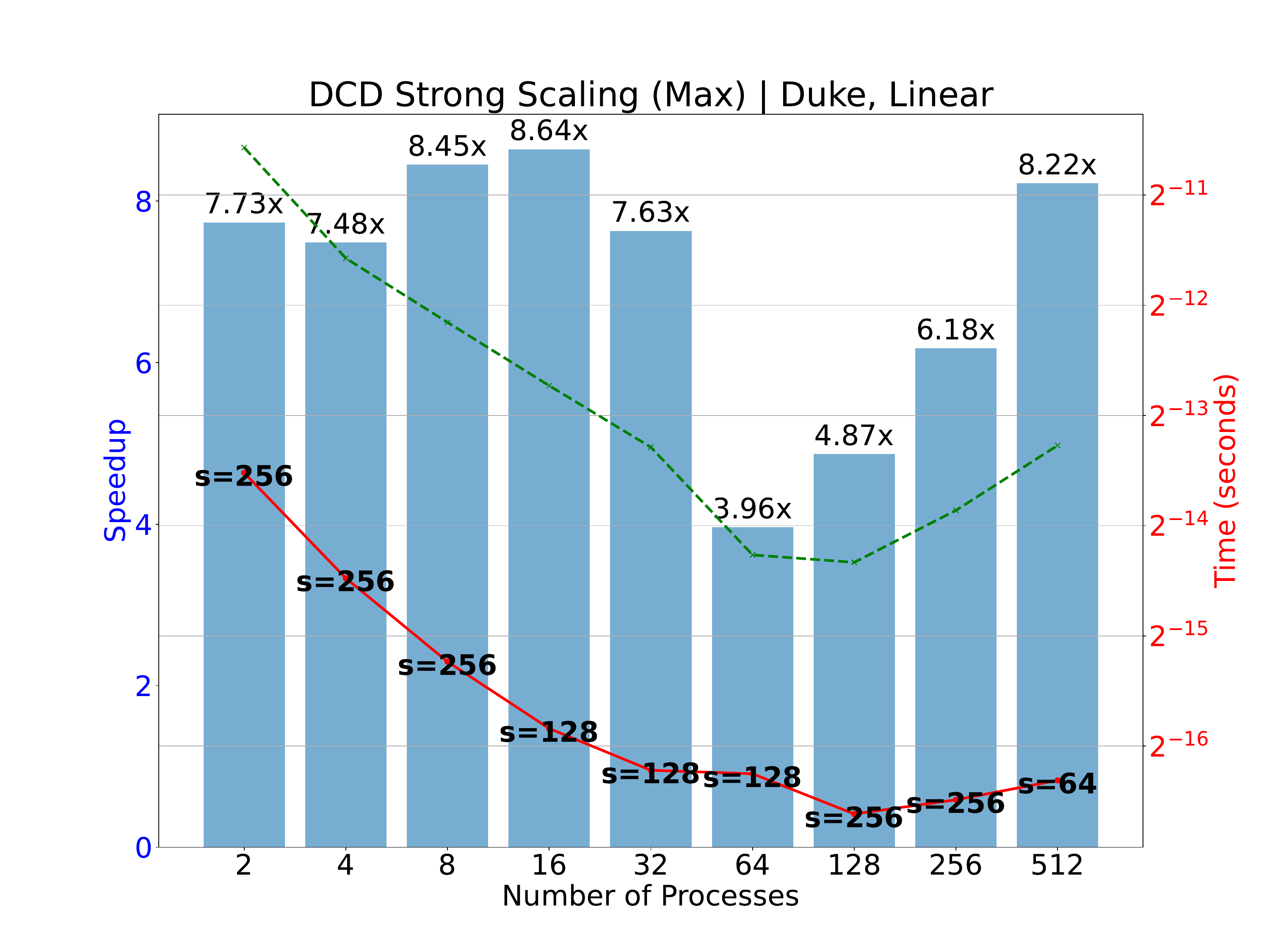}
      \caption{duke, linear}
    \end{subfigure}\hfill
    \begin{subfigure}[t]{0.32\textwidth}
      \includegraphics{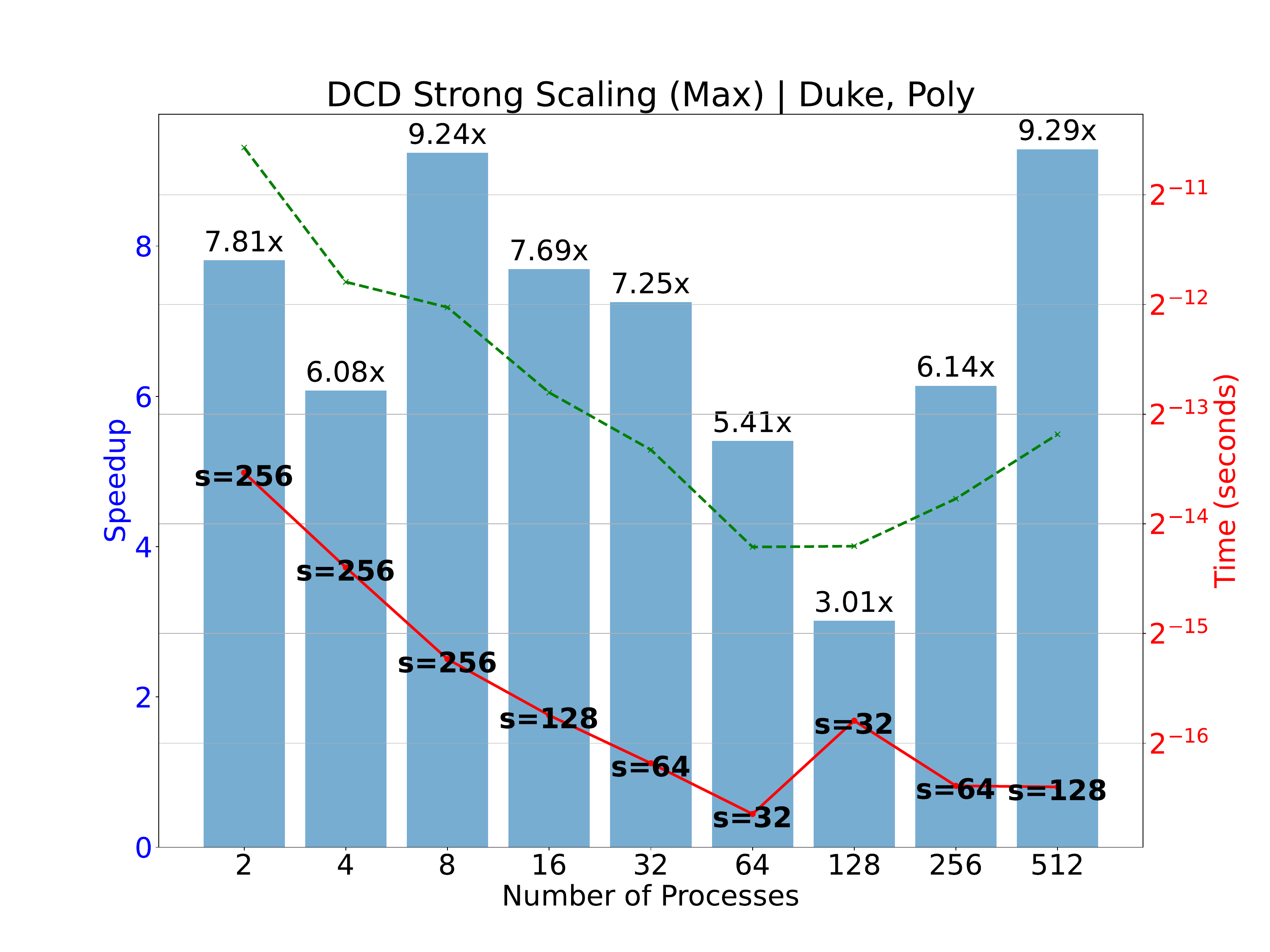}
      \caption{duke, polynomial}
    \end{subfigure}\hfill
    \begin{subfigure}[t]{0.32\textwidth}
      \includegraphics{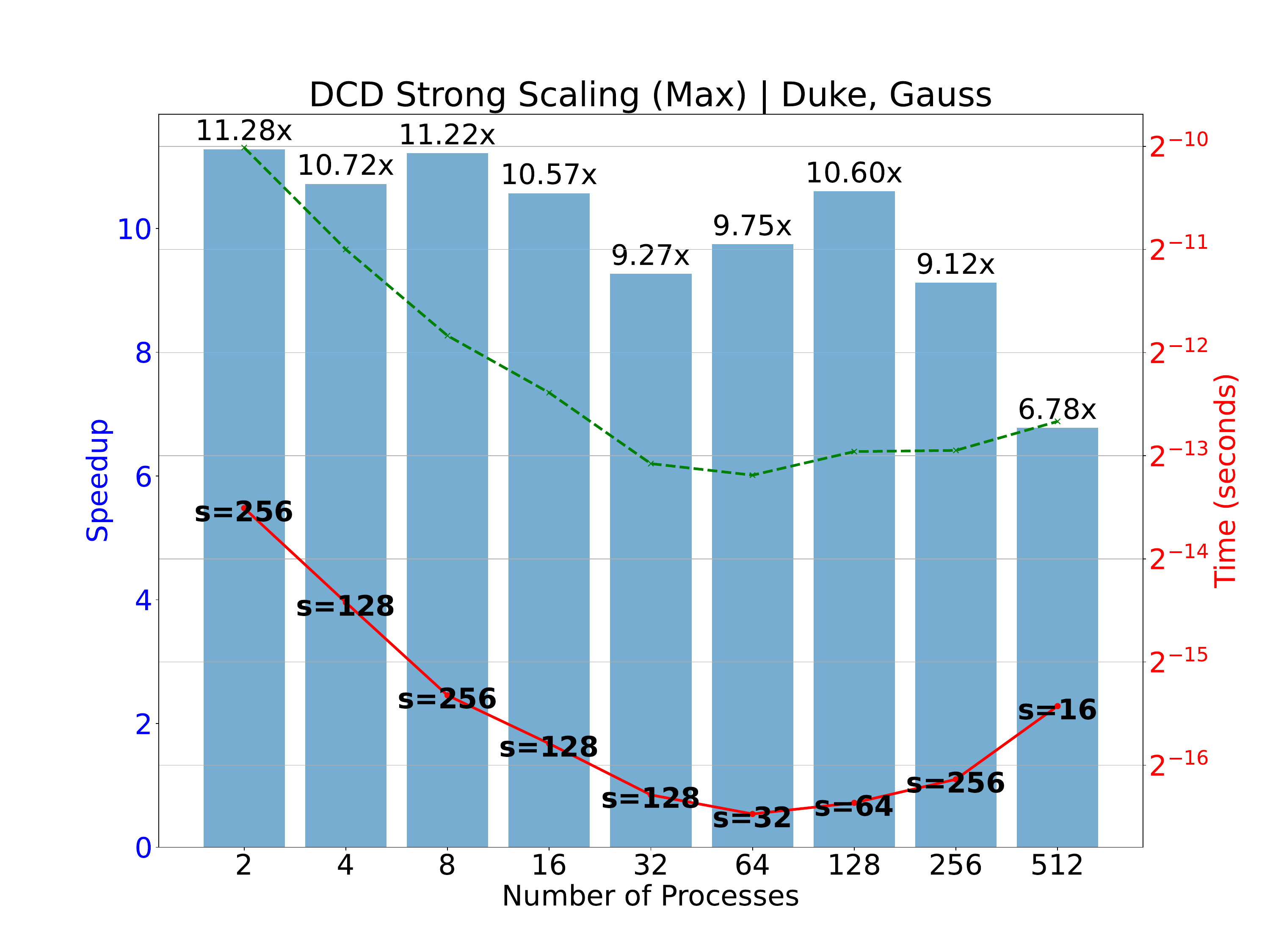}
      \caption{duke, RBF}
    \end{subfigure}
    \begin{subfigure}[t]{0.32\textwidth}
      \includegraphics{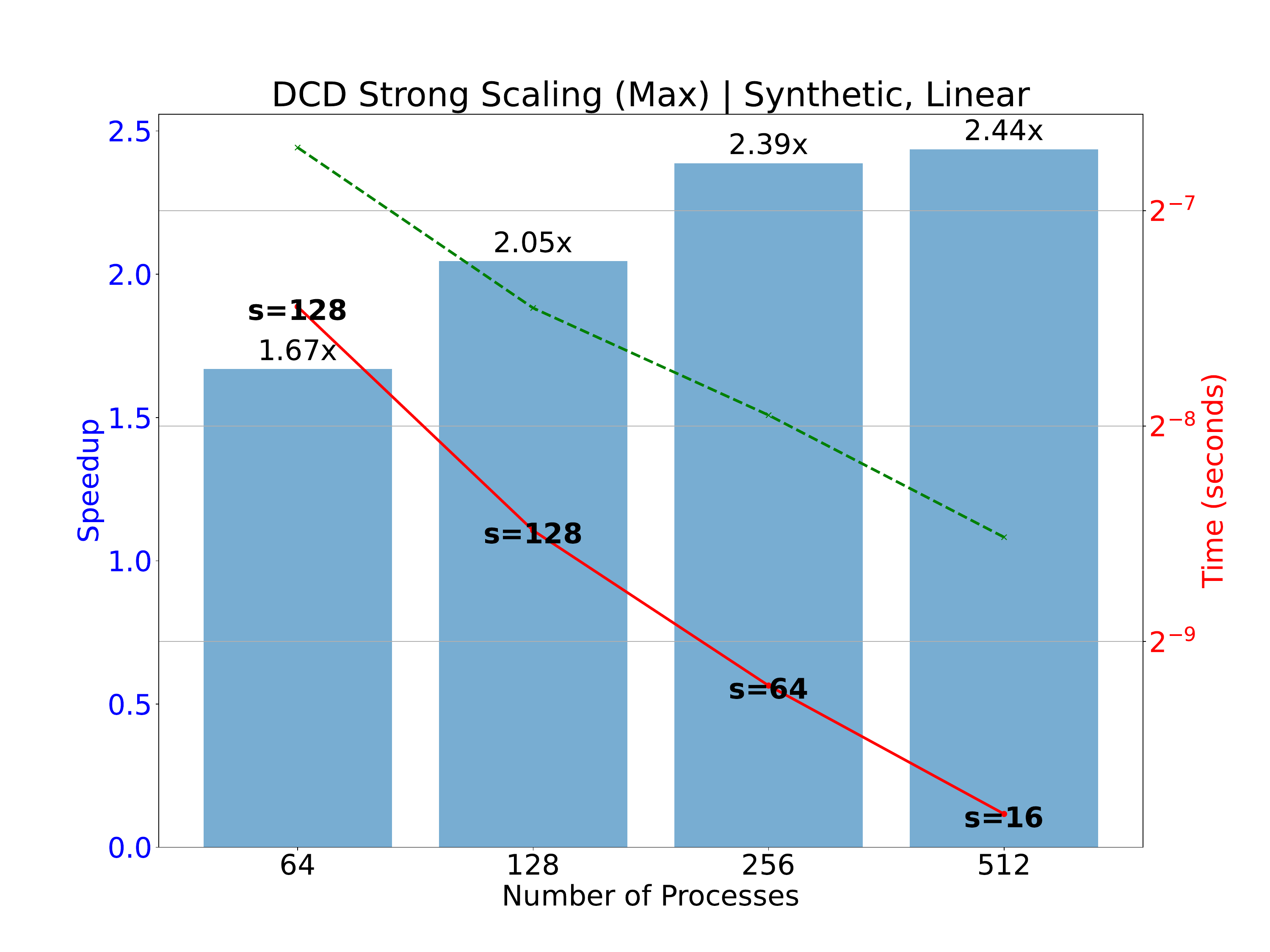}
      \caption{synthtic, linear}
    \end{subfigure}\hfill
    \begin{subfigure}[t]{0.32\textwidth}
      \includegraphics{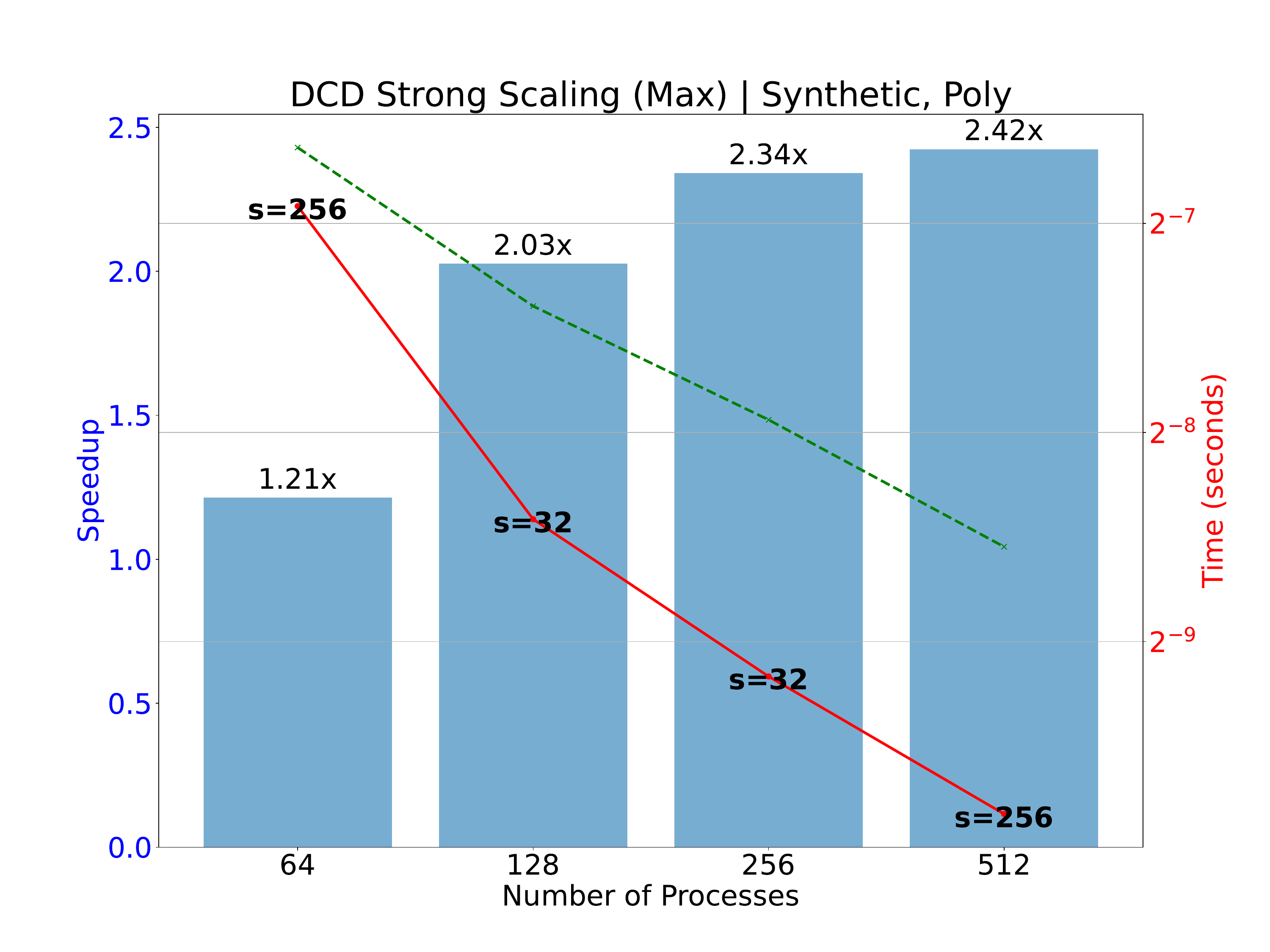}
      \caption{synthetic, polynomial}
    \end{subfigure}\hfill
    \begin{subfigure}[t]{0.32\textwidth}
      \includegraphics{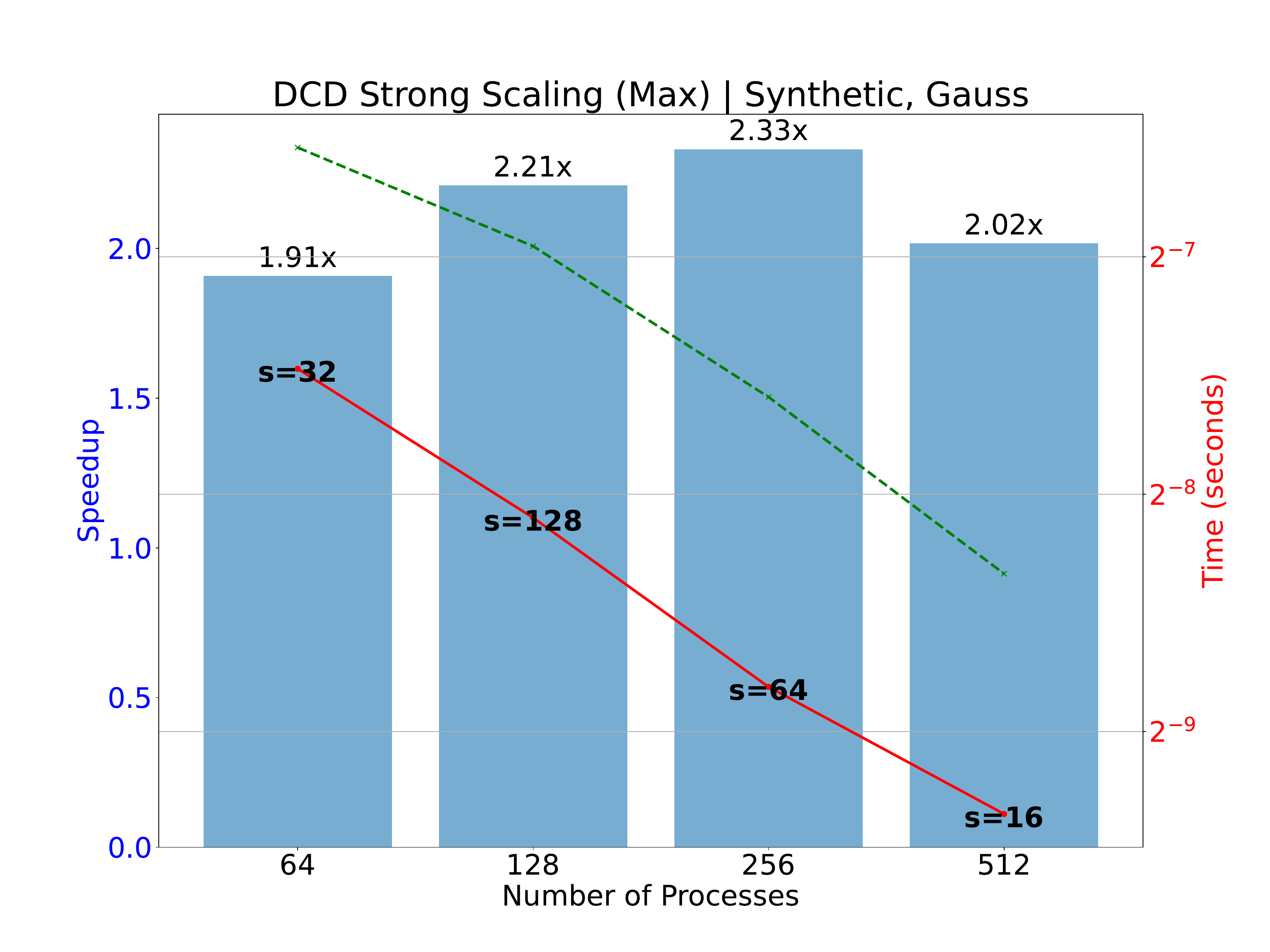}
      \caption{synthetic, RBF}
    \end{subfigure}\hfill
    \caption{Strong Scaling of DCD and $s$-step DCD for K-SVM.}
    \label{fig:strongscaling}
\end{figure*}
We partition the dataset and store it in 1D-column layout (i.e. feature partitioning) so that each MPI process stored roughly $n/P$ columns.
Each CPU node contains two sockets equipped with AMD EPYC 7763 processors with each containing $64$ physical cores.
We did not see any benefits from utilizing simultaneous multi-threading, so we limit the number of MPI processes per node to $128$ processes.
We use MPI process binding and disable dynamic frequency scaling to ensure comparable performance as the number of MPI processes and number of nodes is varied.
\subsubsection{Strong Scaling}\label{subsec:strongscaling}
We explore the strong scaling behavior of the DCD and $s$-step DCD methods for K-SVM in \Cref{fig:strongscaling}.
We also present performance results for BDCD and $s$-step BDCD methods for K-RR, specifically, as the block size is varied.
DCD is limited to $b = 1$ since SVM does not have closed-form solution, thus, we expect better scaling and performance.
Given the computation, bandwidth, and latency trade off, we expect $s$-step BDCD to achieve reduced performance benefits as $b$ is increased.
We report running time and speedups with respect to the slowest processor.
We performed offline tuning of $s$ to obtain the setting which achieves the best running time.
Note that we limit the values of $s$ to powers of two, thus, additional performance may be attainable with fine-grained tuning of $s$.
\begin{figure*}
    \centering
    \setkeys{Gin}{width=1\linewidth}
    \begin{subfigure}[t]{0.32\textwidth}
      \includegraphics{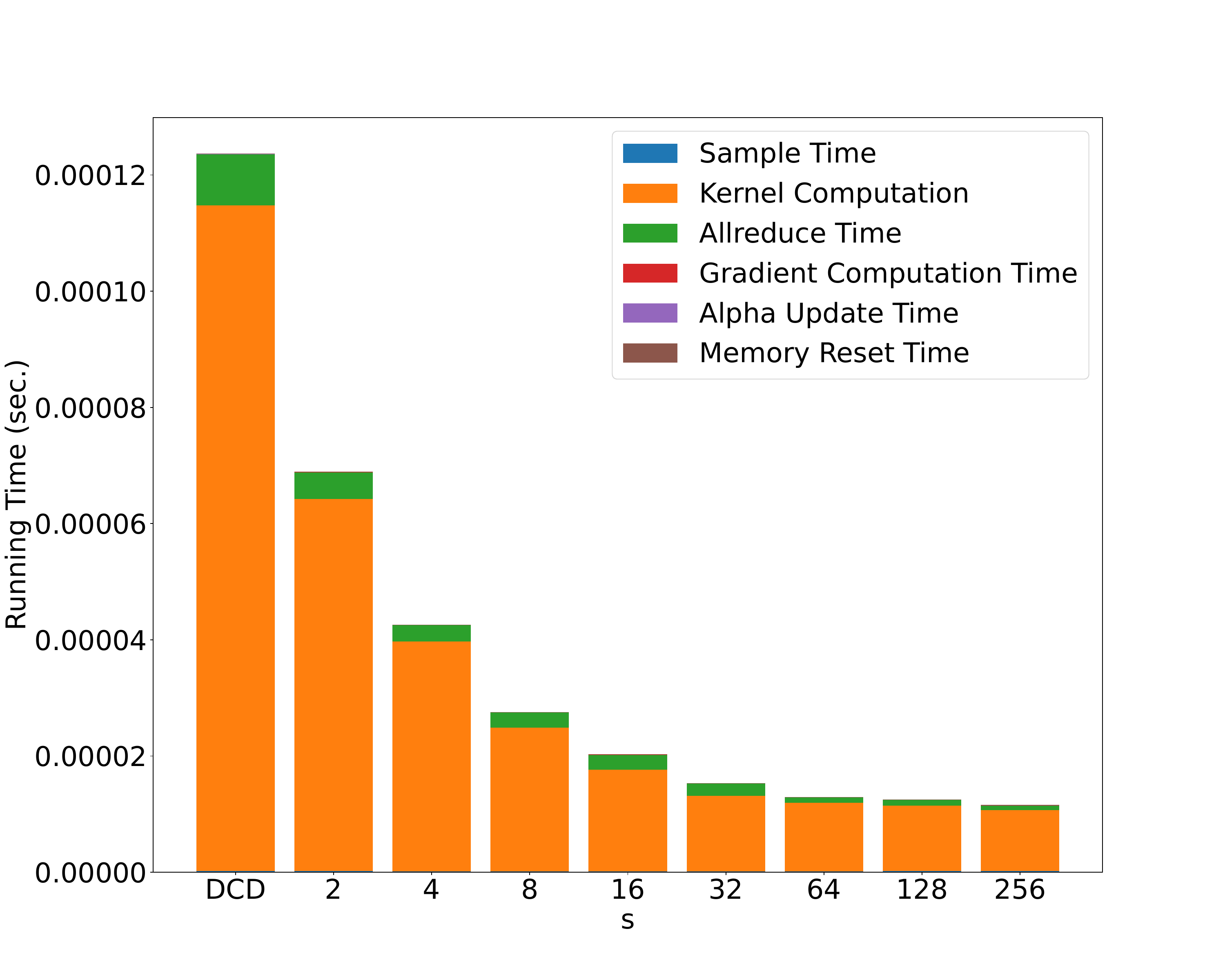}
      \caption{colon-cancer, P = 64, RBF}
    \end{subfigure}\hfill
    \begin{subfigure}[t]{0.32\textwidth}
      \includegraphics{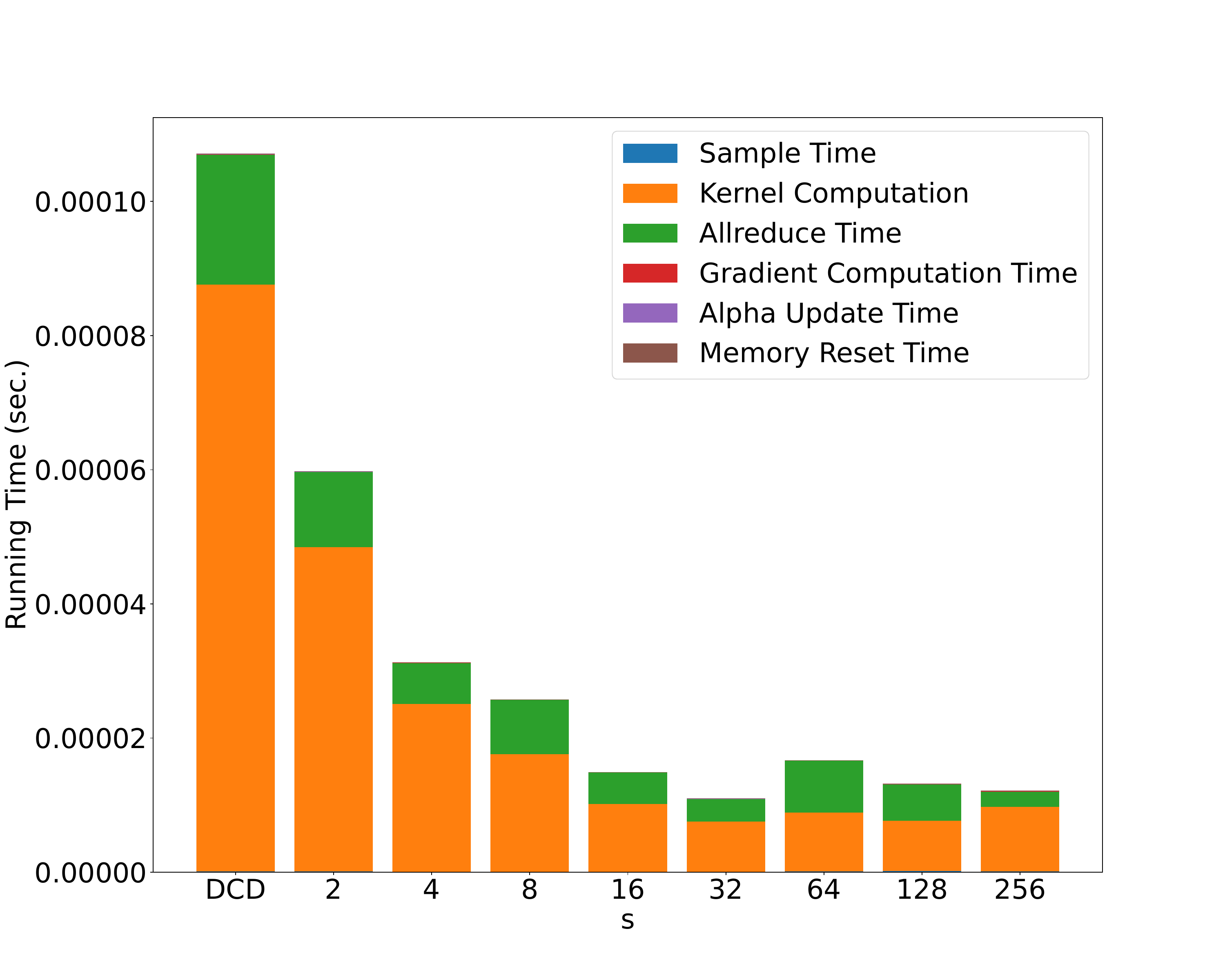}
      \caption{duke, P = 64, RBF}
    \end{subfigure}
    \begin{subfigure}[t]{0.32\textwidth}
      \includegraphics{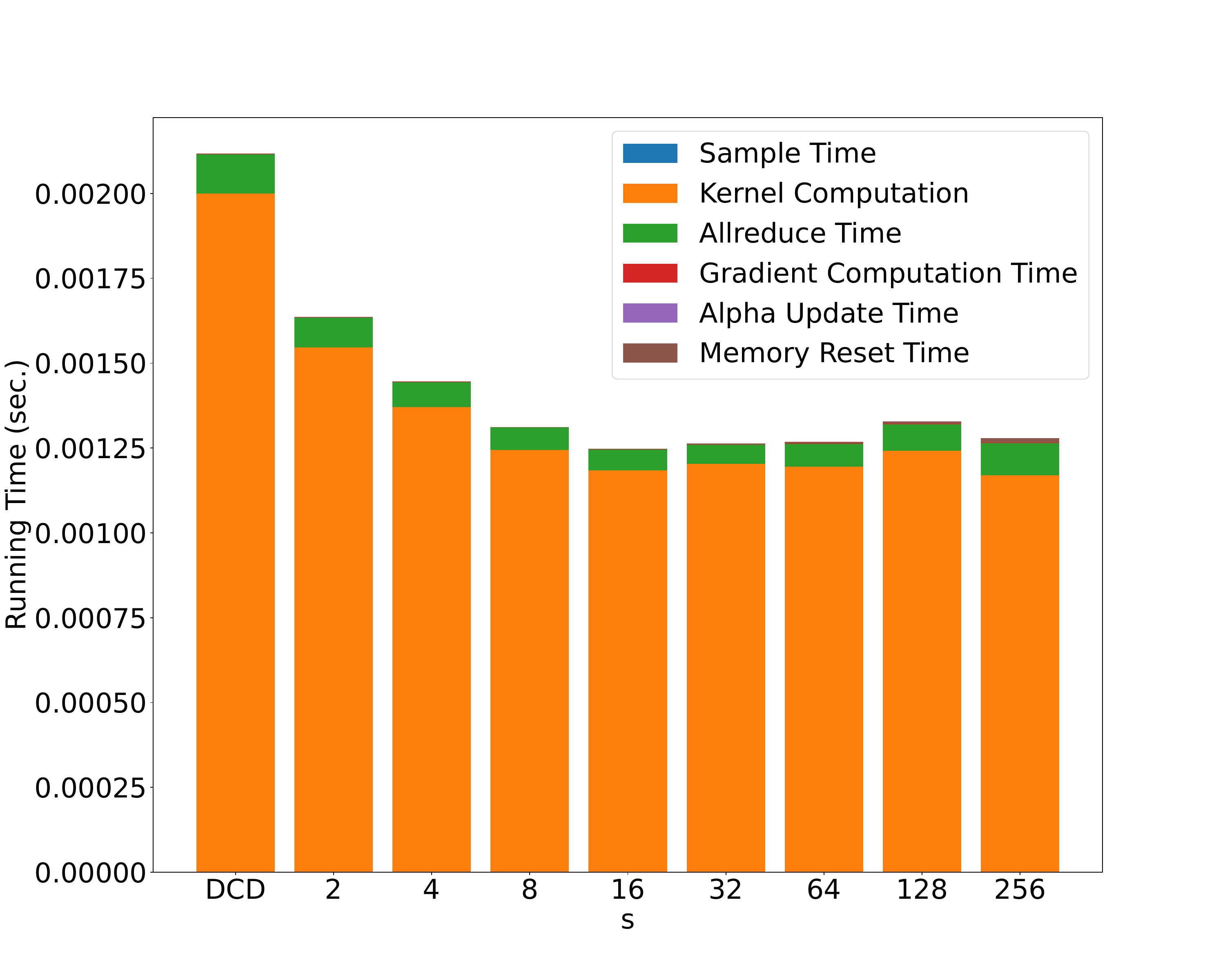}
      \caption{synthetic, P = 512, RBF}
    \end{subfigure}
    \caption{Running Time Breakdown of DCD and $s$-step DCD for values of $P$ with fastest running times.}
    \label{fig:runtime-breakdown}
\end{figure*}
The small, colon cancer dataset exhibits scalability to $O(10)$ processors in \Cref{fig:strongscaling}.
The DCD method is latency bound, so we observe large speedups from the $s$-step DCD method.
On the colon cancer dataset, the $s$-step DCD method attains speedups up to $3.5\times, 4.3\times$, and $8.9\times$ on the linear, polynomial, and RBF kernels, respectively.
On the duke dataset, the $s$-step DCD method attains speedups up to $4.8\times, 5.4\times$, and $9.8\times$ on the linear, polynomial, and RBF kernels, respectively.
Finally, the larger, synthetic dataset attains speedups up to $2.4\times, 2.4\times,$ and $2\times$ on the linear, polynomial, and RBF kernels, respectively.
Finally, the $s$-step BDCD method attains more modest speedups as the block size is increased for all kernels and all datasets, as illustrated in \Cref{table:bdcd-scaling}.
We see decreasing benefits as the block size is increased for the colon-cancer and duke datasets.
For the colon-cancer dataset, we observed speedups of up to $4.78\times$ at $b = 1$ and $1.7\times$ at $b = 4$.
For the colon-cancer dataset, we observed speedups of up to $5.48\times$ at $b = 1$ and $1.68\times$ at $b = 4$.
\begin{table}[h]
    \centering
    \begin{tabular}{|c|c|c|c|c|}
        \hline
        \multirow{2}{*}{Dataset} & \multirow{2}{*}{Kernel} & \multicolumn{3}{c|}{Speedup} \\
        \cline{3-5}
        & & $b = 1$ & $b = 2$ & $b = 4$ \\
        \hline
        \multirow{3}{*}{colon-cancer} & Linear & 4.18 $\times$ & 3.63 $\times$ & 1.86 $\times$ \\
        & Polynomial & 4.08 $\times$ & 2.41 $\times$ & 1.71 $\times$ \\
        & Gauss & 4.78 $\times$ & 3.63 $\times$ & 2.48 $\times$ \\
        \hline
        \multirow{3}{*}{duke} & Linear & 5.48 $\times$ & 3.50 $\times$ & 2.61 $\times$ \\
        & Polynomial & 4.08 $\times$ & 2.41 $\times$ & 1.71 $\times$ \\
        & Gauss & 3.59 $\times$ & 2.54 $\times$ & 1.68 $\times$ \\
        \hline
        \multirow{3}{*}{news20.binary} & Linear & 2.03 $\times$ & 1.32 $\times$ & 1.11 $\times$ \\
        & Polynomial & 1.74 $\times$ & 1.16 $\times$ & 1.09 $\times$ \\
        & Gauss & 1.63  $\times$ & 1.17 $\times$ & 1.11 $\times$ \\
        \hline
    \end{tabular}
    \caption{Speedups attained by $s$-step BDCD over BDCD for solving the K-RR problem for different  block sizes, $b$.}
    \label{table:bdcd-scaling}
\end{table}
\subsubsection{Runtime Breakdown}\label{subsec:runtimebreakdown}
\Cref{fig:runtime-breakdown} presents the running time breakdown of the DCD and $s$-step DCD methods for the colon-cancer, duke, and synthetic datasets for various settings of $s$.
We show results for the values of $P$ that achieve the fastest strong scaling running time and show only the results for the RBF kernel.
A notable observation in \Cref{fig:runtime-breakdown} is the decrease in kernel computation time as we increase $s$.
Since DCD is limited to a block size of $1$, the kernel computation is limited to computing a single row of the kernel matrix at each iteration.
In contrast, the $s$-step DCD method computes $s$ rows of the kernel matrix every (outer) iteration.
As a result, the SparseBLAS routines have better single-node memory-bandwidth utilization (in addition to decreasing latency cost).
\Cref{fig:runtime-breakdown} also shows a decrease in MPI allreduce time, which is expected when latency cost dominates.
However, for the synthetic dataset when $s > 16$, we observe that the allreduce time increases.
Note that this increase in communication time is also expected as the bandwidth term begins to dominate.
Thus, the value of $s$ must be tuned carefully to achieve the best performance.
The $s$-step methods require the same total bandwidth as the classical methods, but the per message bandwidth increases by a factor of $s$.
For the colon-cancer and duke datasets, we see significant improvements in kernel computation and allreduce times with $s = 256$ and $s = 32$ being the optimal setting for colon-cancer and duke, respectively.
For the synthetic dataset, we see a smaller factor of improvement in kernel computation and allreduce times.
This suggests that the synthetic dataset can be strong scaled further before the DCD runtime becomes dominated by DRAM/network communication.
\subsubsection{Performance and Load Balance}\label{subsec:loadimbalance}
\begin{figure*}
    \centering
    \includegraphics[width=.8\columnwidth]{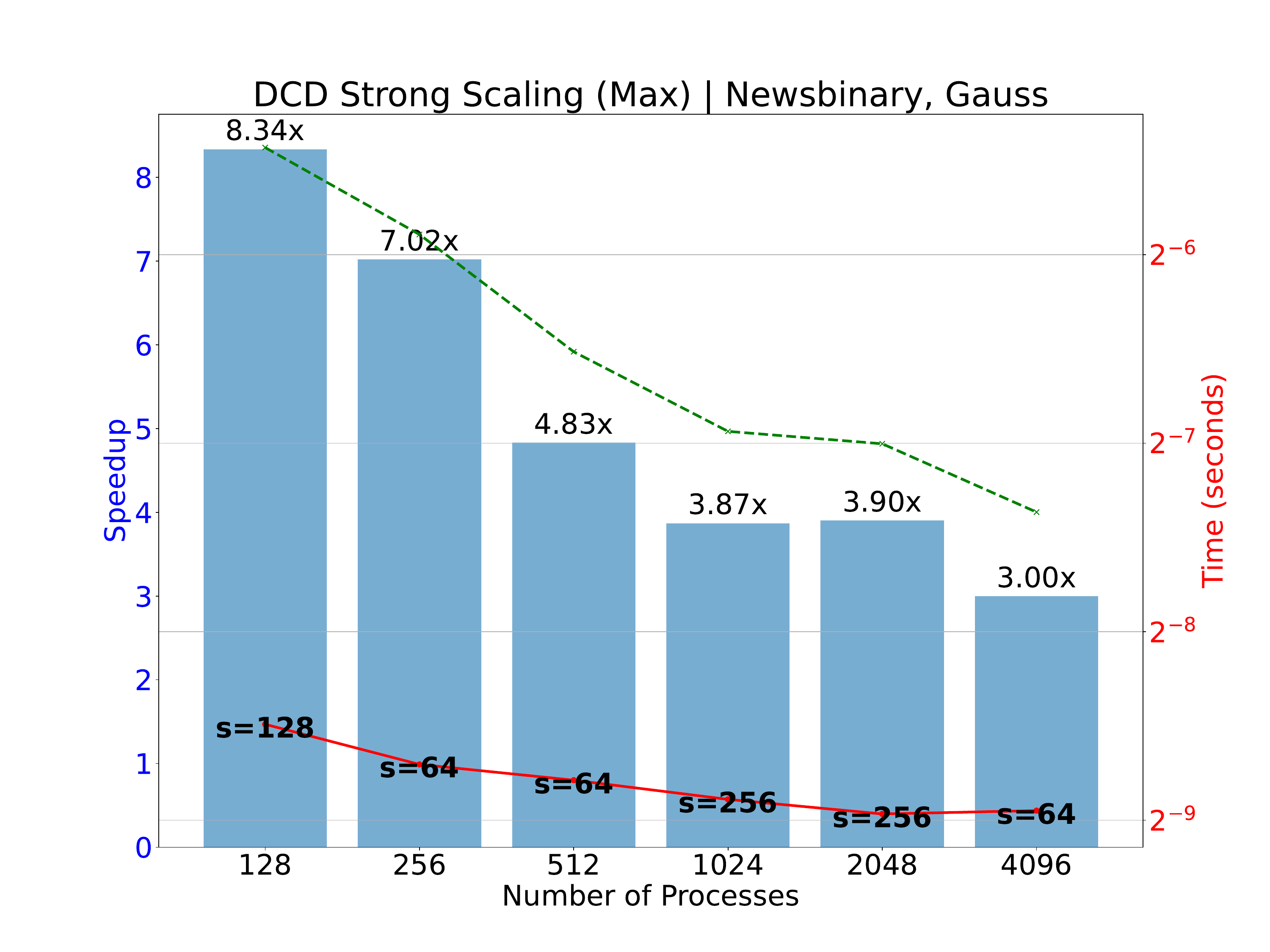}
    \includegraphics[width=.7\columnwidth]{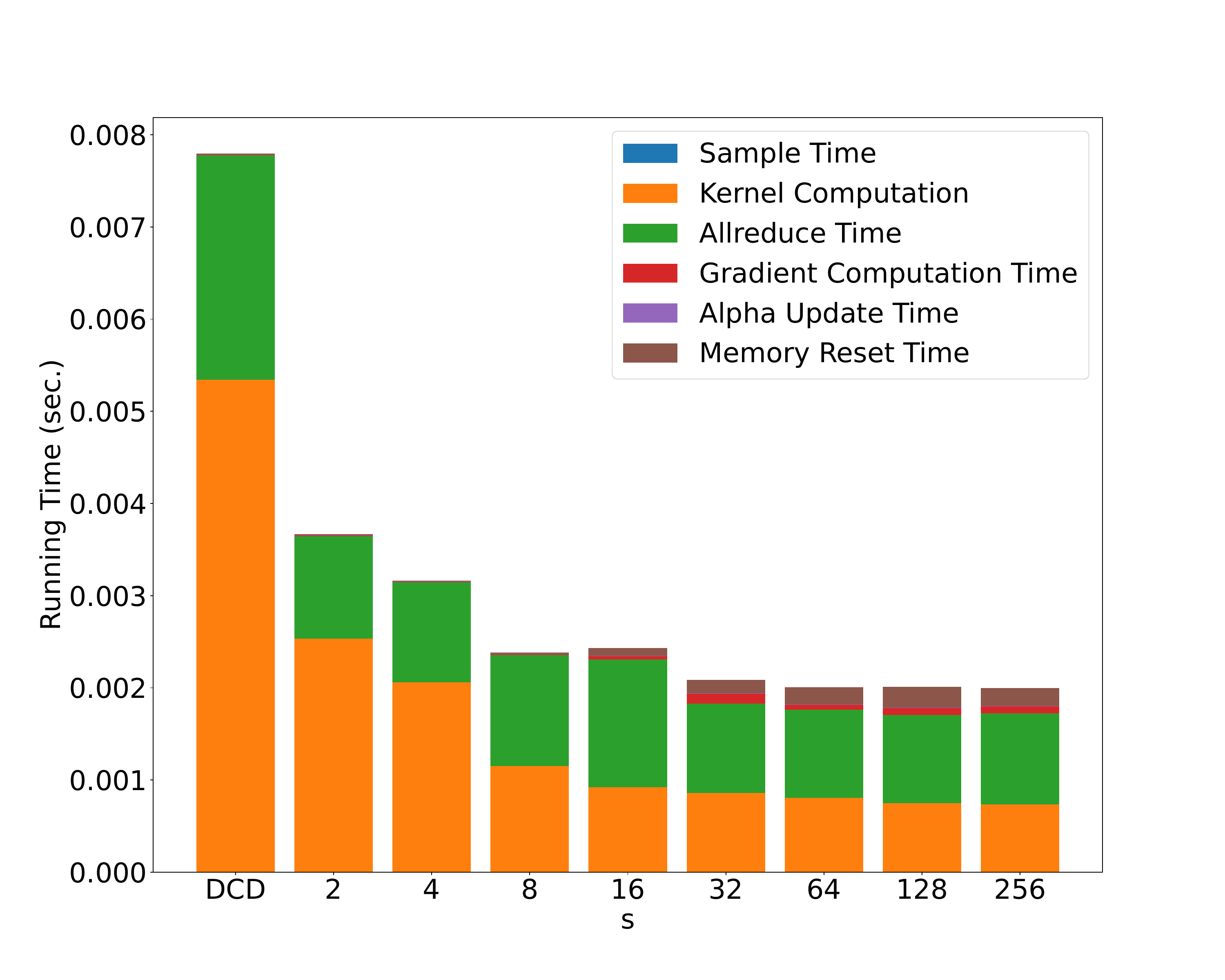}
    \caption{DCD and $s$-step DCD strong scaling and speedup on the news20.binary dataset for K-SVM with RBF kernel.}
    \label{fig:news20-dcd-scaling}
\end{figure*}
Experiments in \Cref{subsec:strongscaling,subsec:runtimebreakdown} showed results for LIBSVM and synthetic datasets which were load balanced.
However, the news20.binary dataset contains non-uniform nonzero distribution which led to load imbalance when stored in 1D-column layout across $P$ processors.
This section explores the performance trade offs of the classical and $s$-step methods under load imbalance specifically for the news20.binary dataset.
\Cref{fig:news20-dcd-scaling} shows the strong scaling behavior of the DCD and $s$-step DCD methods for K-SVM.
DCD under utilizes the available DRAM bandwidth, therefore, the good strong scaling behavior is primarily due to DRAM bandwidth doubling as $P$ is also doubled.
In contrast, $s$-step DCD has more efficient DRAM bandwidth utilization since a block of $s$ rows of the kernel matrix are computed per (outer) iteration.
As a result, $s$-step DCD strong scaling hits the load imbalance scaling limit before DCD.
$s$-step DCD attains a speedup of $3\times$ over DCD at $P = 4096$ with $s = 64$.
\begin{figure*}
    \centering
    \includegraphics[width=.8\columnwidth]{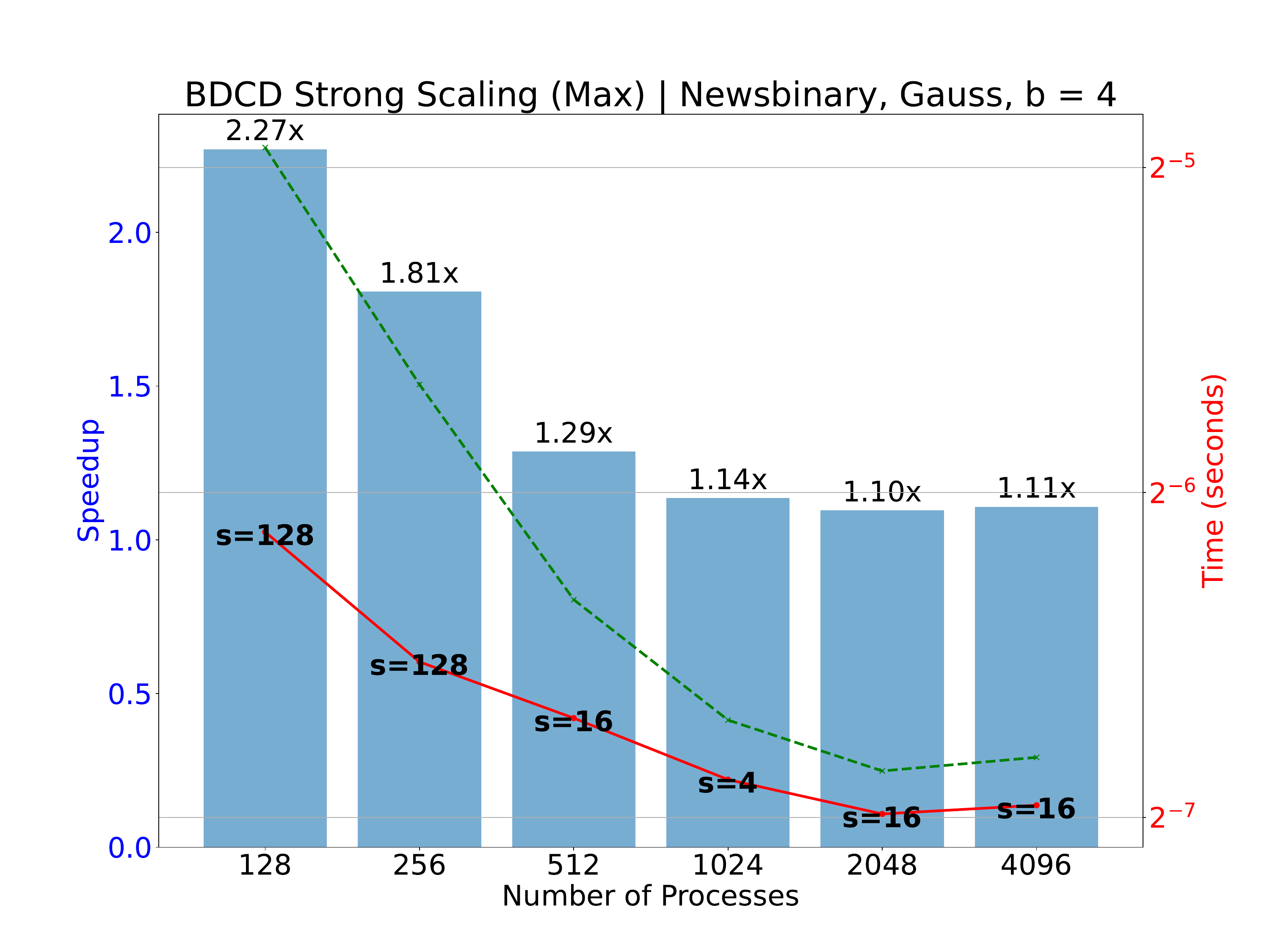}
    \includegraphics[width=.7\columnwidth]{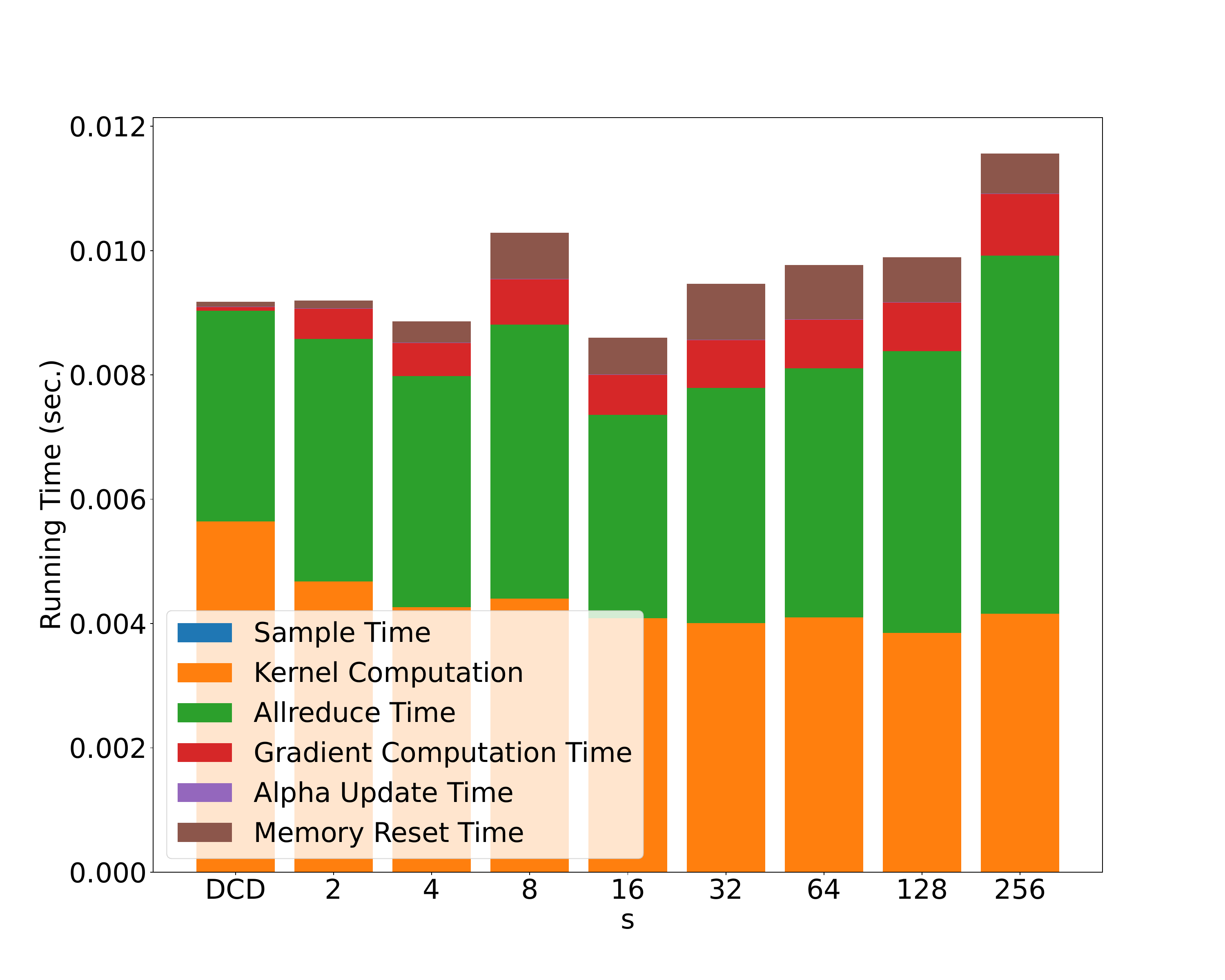}

    \caption{BDCD and $s$-step BDCD strong scaling and speedup on the news20.binary dataset for K-RR with RBF kernel.}
    \label{fig:news20-bdcd-scaling}
\end{figure*}
\Cref{fig:news20-dcd-scaling} shows the running time breakdown of DCD and $s$-step DCD at $P = 2048$ (the fastest $s$-step running time) as $s$ is varied.
The results highlight once again that the $s$-step DCD method reduces both kernel computation and allreduce times for $s > 1$.
Since the news20.binary dataset also has the largest number of samples ($m = \mathcal{O}\left(10^4\right)$), we also see a larger fraction of runtime related to gradient computation and memory management for $s > 1$.
This is due to the factor of $\binom{s}{2} b^2$ additional computation required for the $s$-step method to perform gradient correction.
Once the $s$-step inner loop is computed, the temporary buffers must be reset for the next iteration.
Hence, there is additional running time overhead which increases proportionally with $s$.
However, we note that the gradient correction and memory reset overheads are a small fraction of running time when compared to kernel computation and allreduce times.

\Cref{fig:news20-bdcd-scaling} shows the BDCD and $s$-step BDCD strong scaling and speedup at $b = 4$ with the RBF kernel.
Given the larger block size, both methods exhibit better strong scaling behavior throughout the range of $P$ tested.
The $s$-step BDCD method reaches the load imbalance scaling limit before BDCD, as expected, due to more efficient memory-bandwidth utilization.

\Cref{fig:news20-bdcd-runtime} shows the running time breakdown for BDCD and $s$-step BDCD with $b = 4$ and $P = 2048$ (where BDCD achieves the fastest running time) as $s$ is varied.
Given the larger block size, we observe reduced kernel computation benefits when $s$ is increased when compared to the K-SVM results.
Allreduce time also becomes more bandwidth dominant since $m = \mathcal{O}(10^4)$, therefore, the overall performance benefits of the $s$-step BDCD method reduces to a $1.14\times$ speedup of BDCD.
Furthermore, as $s$ continues to increase we can observe that allreduce bandwidth, gradient correction, and memory reset overhead become a larger fraction of running time.
This suggests that we cannot set both $s$ and $b$ to very large values.
inverse trend is observed with allreduce time; it becomes increasingly significant with higher $s$ values and a greater number of processes, both of which are bandwidth-dominated cases.
For instance, at $2048$ processes, the allreduce time constitutes over $45\%$ of the total runtime at $s = 256$, compared to less than $20\%$ for the same $s$ value at $128$ processes.
Figures 3 (d - f) further illustrate that larger batch sizes tend to exacerbate the dominance of allreduce time. In the case of the news20.binary dataset, there is an increase in overall runtime despite the reduction in kernel computation time as \textit{s} increases. This phenomenon is also evident in the colon-cancer dataset. As depicted in Figure 4 (a - b), the CA-BDCD algorithm continues to reduce the running time until \textit{s} reaches $32$.
Beyond this point, an increase in kernel computation time is observed. The proportion of allreduce time relative to the total runtime also grows with the number of processes; it is significantly less dominant at $p = 4$ of the runtime than at $p = 32$.
In the case of the DCD algorithm, a similar pattern is observed, as both DCD and CA-DCD algorithms involve the computation and communication of $AA^T$.
Specifically, the CA-DCD algorithm efficiently reduces running time with increasing values of $s$.
An increase in the number of processes correlates with a higher proportion of allreduce time.
However, the absolute value of the allreduce time remains relatively stable across different process counts.
This stability is attributed to the fact that the size of the message communicated during allreduce does not depend on the number of processes.
Instead, it is the consistent reduction in kernel computation time that results in the allreduce time occupying a larger proportion of the overall runtime.
\begin{figure*}
  \centering
  \setkeys{Gin}{width=1\linewidth}
  \begin{subfigure}[t]{0.32\textwidth}
    \includegraphics{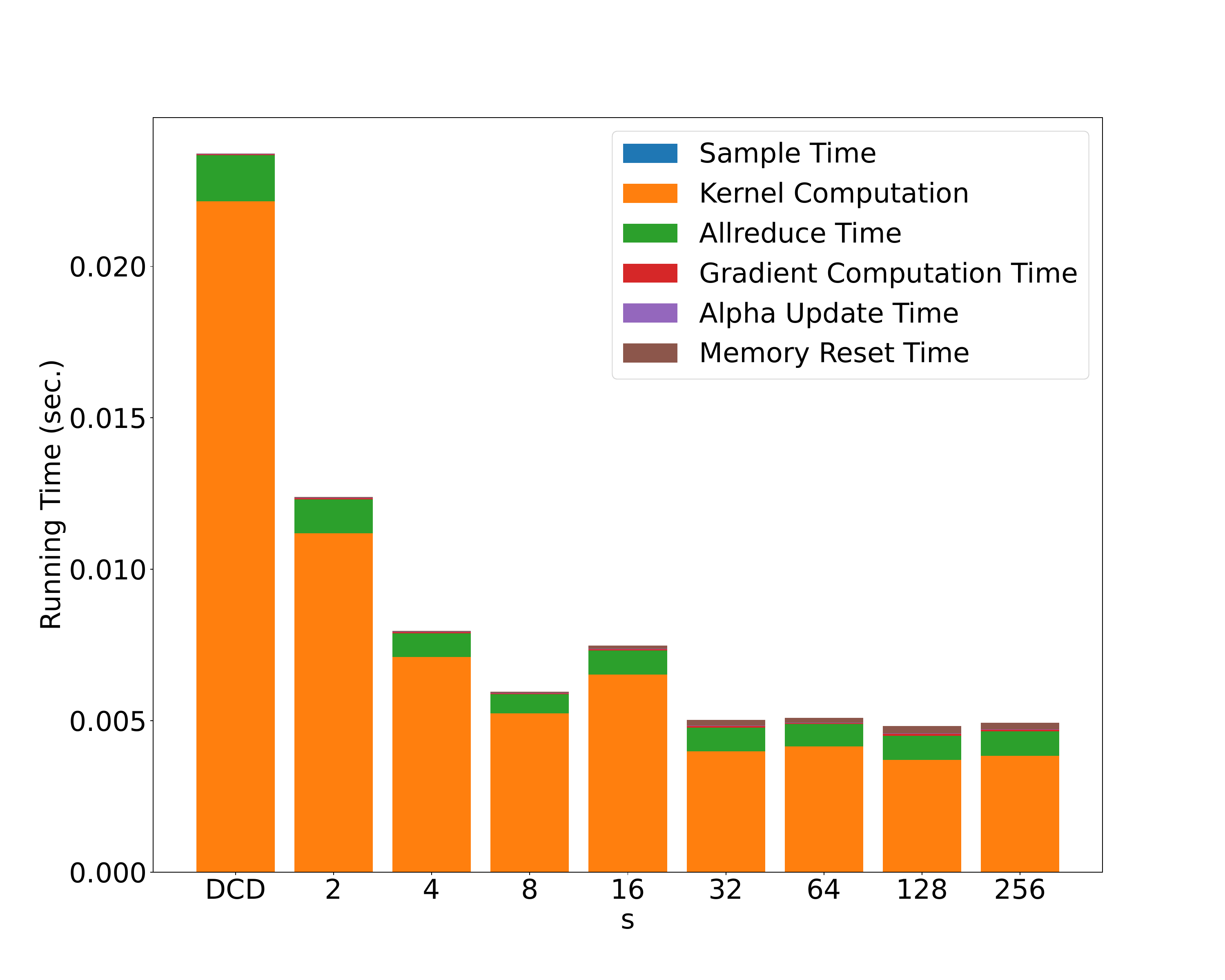}
    \caption{news20.binary, np = 128, gauss}
  \end{subfigure}\hfill
  \begin{subfigure}[t]{0.32\textwidth}
    \includegraphics{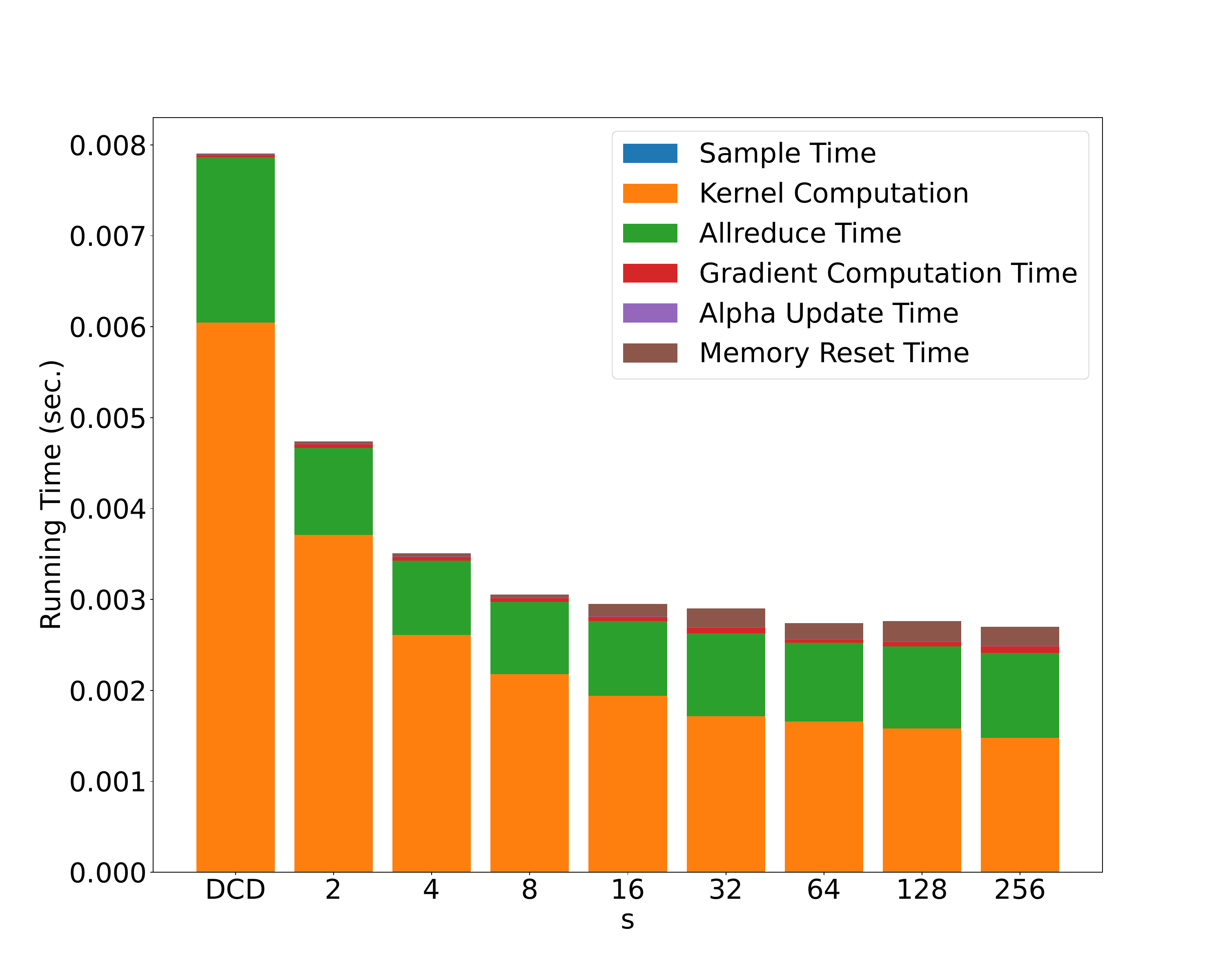}
    \caption{news20.binary, np = 512, gauss}
  \end{subfigure}\hfill
  \begin{subfigure}[t]{0.32\textwidth}
    \includegraphics{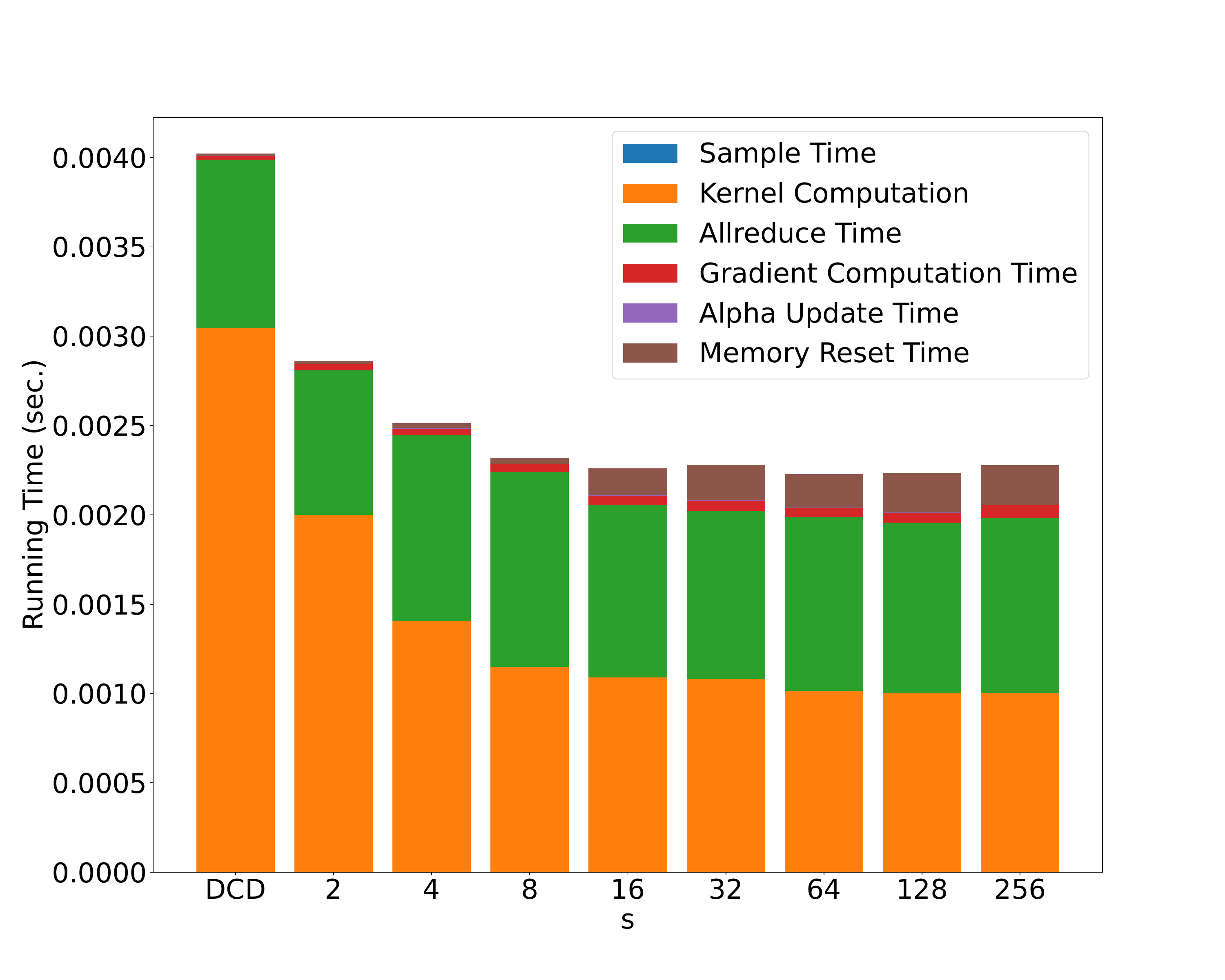}
    \caption{news20.binary, np = 2048, gauss}
  \end{subfigure}\hfill
    \setkeys{Gin}{width=1\linewidth}
  \begin{subfigure}[t]{0.32\textwidth}
    \includegraphics{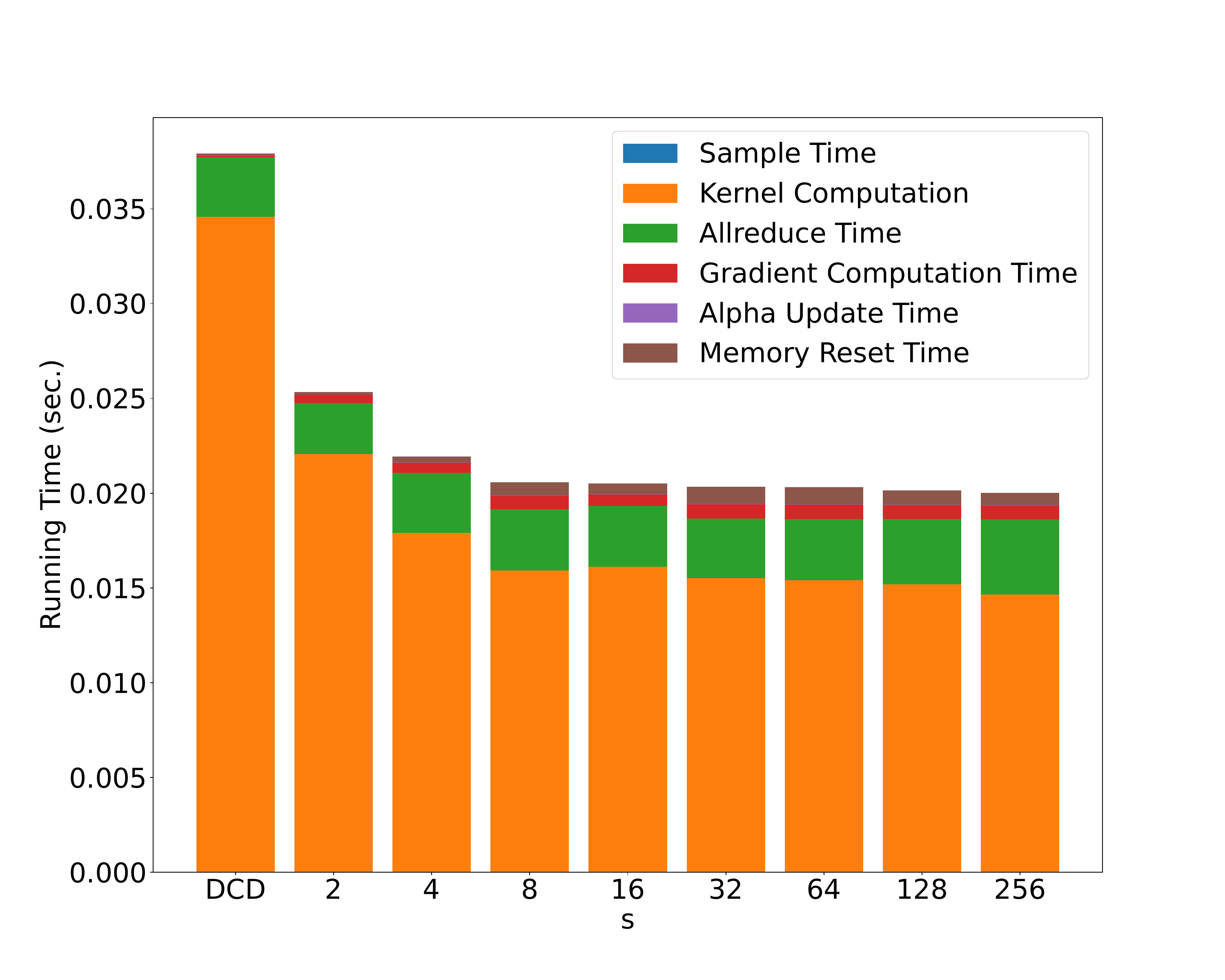}
    \caption{news20.binary, np = 128, batch 4, gauss}
  \end{subfigure}\hfill
  \begin{subfigure}[t]{0.32\textwidth}
    \includegraphics{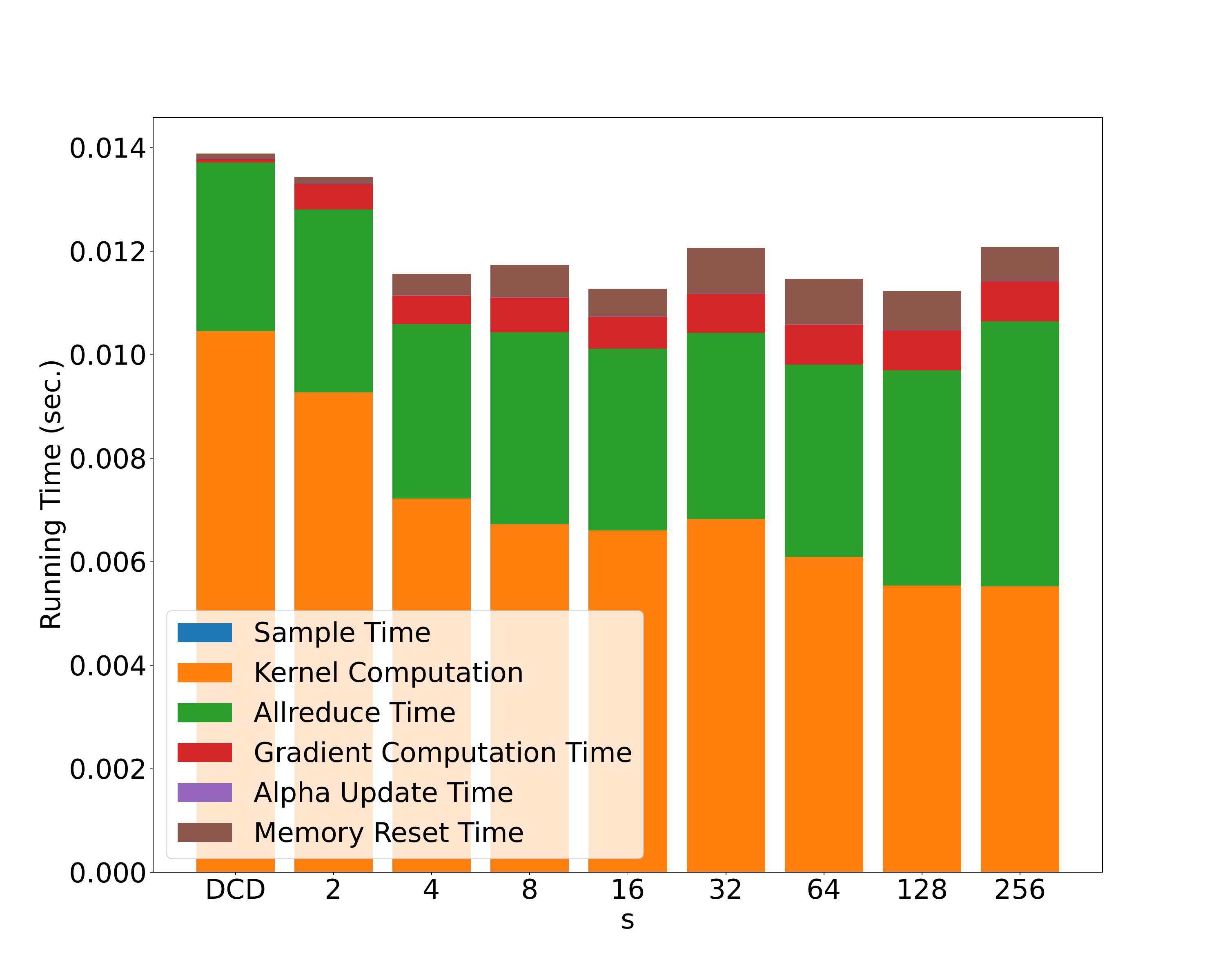}
    \caption{news20.binary, np = 512, batch 4, gauss}
  \end{subfigure}\hfill
  \begin{subfigure}[t]{0.32\textwidth}
    \includegraphics{figures/ridge/composition/Newsbinary_gauss_np2048_b4_Composition.pdf}
    \caption{news20.binary, np = 2048, batch 4, gauss}
  \end{subfigure}\hfill
  \caption{Running Time Breakdown CA-BDCD vs BDCD, Newsbinary}
  \label{fig:news20-bdcd-runtime}
\end{figure*}
\begin{figure*}
  \centering
  \setkeys{Gin}{width=1\linewidth}
  \begin{subfigure}[t]{0.48\textwidth}
    \includegraphics{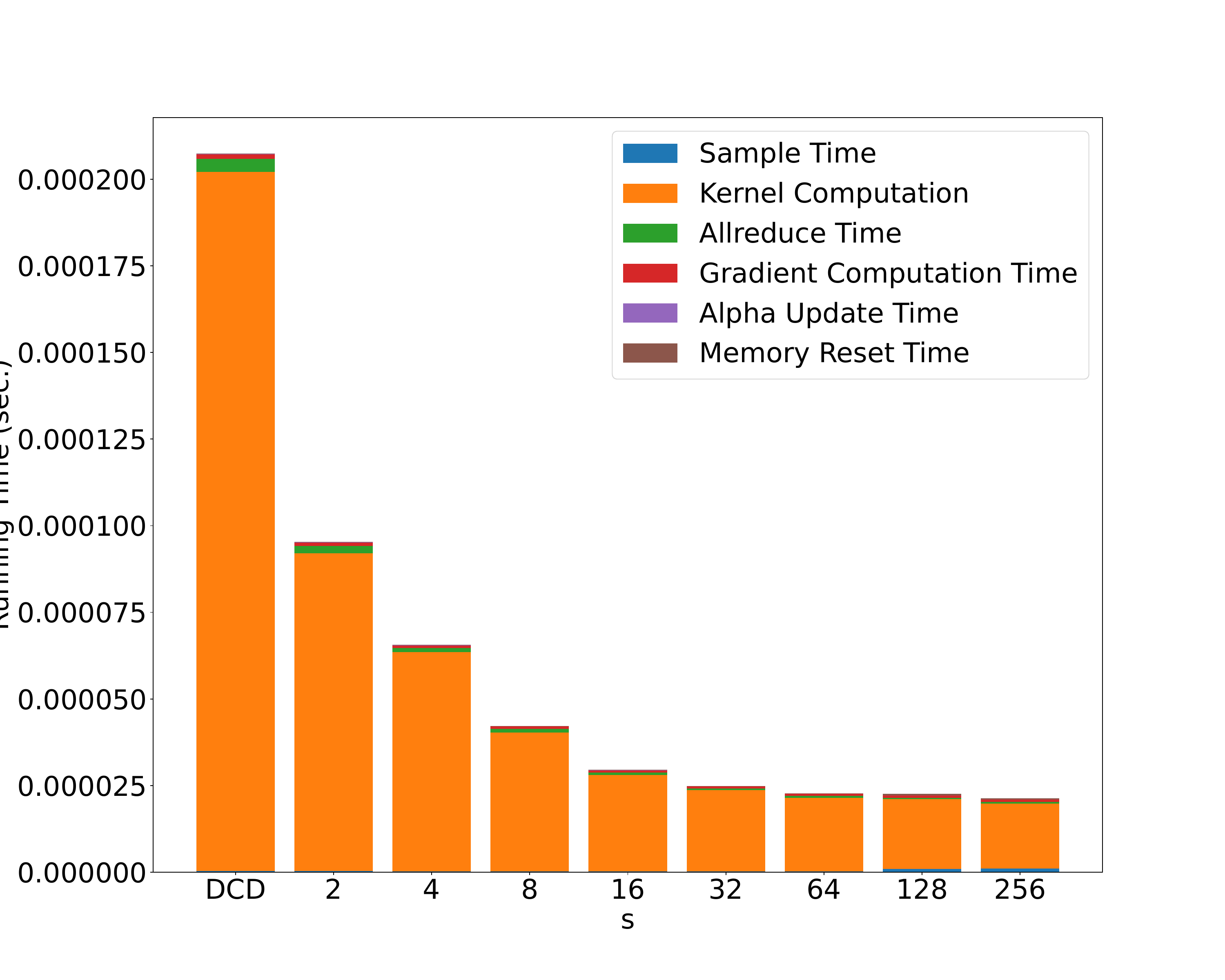}
    \caption{colon-cancer, np = 4, b = 1, gauss}
  \end{subfigure}\hfill
  \begin{subfigure}[t]{0.48\textwidth}
    \includegraphics{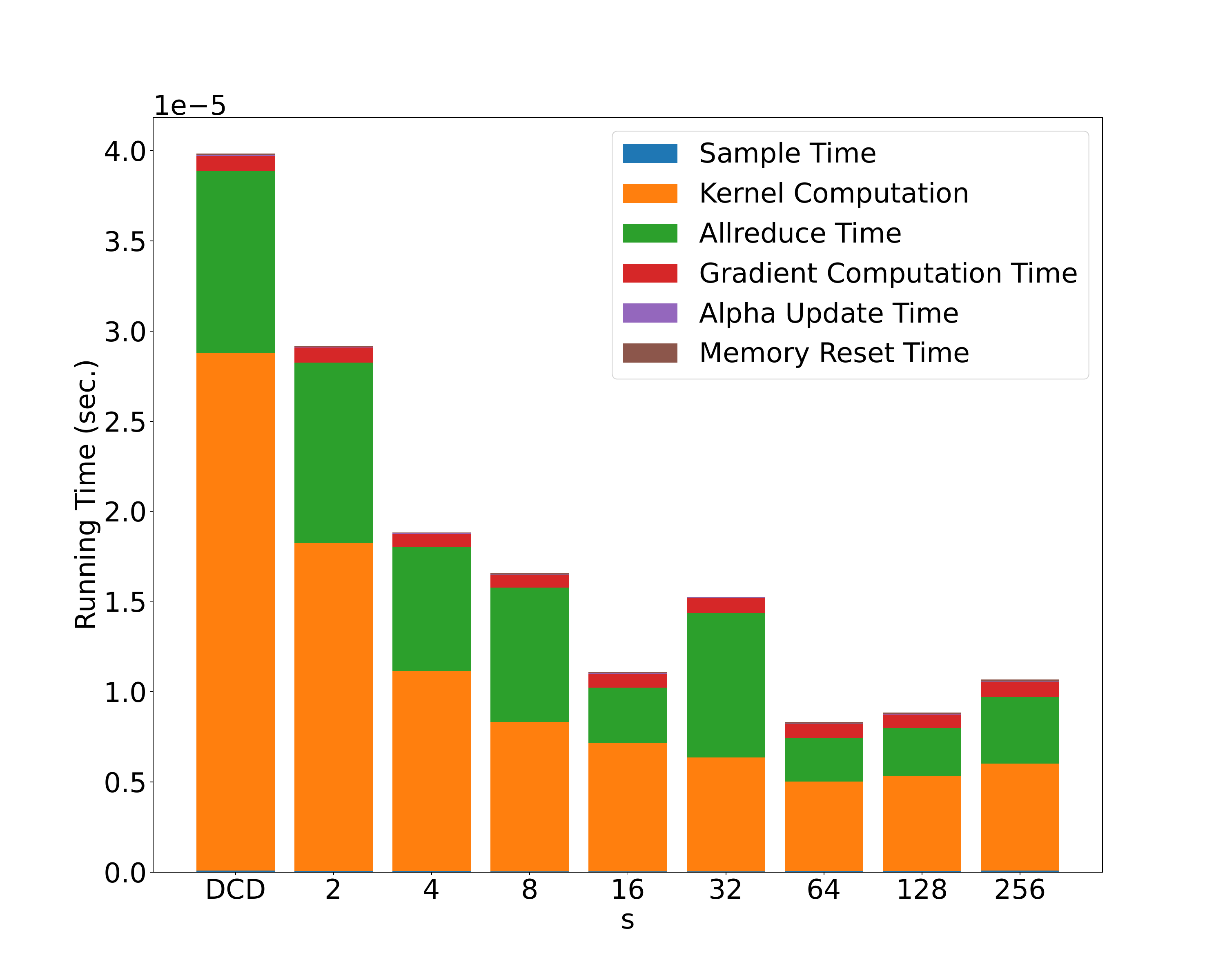}
    \caption{colon-cancer, np = 32, b = 1, gauss}
  \end{subfigure}\hfill

 \setkeys{Gin}{width=1\linewidth}
  \begin{subfigure}[t]{0.48\textwidth}
    \includegraphics{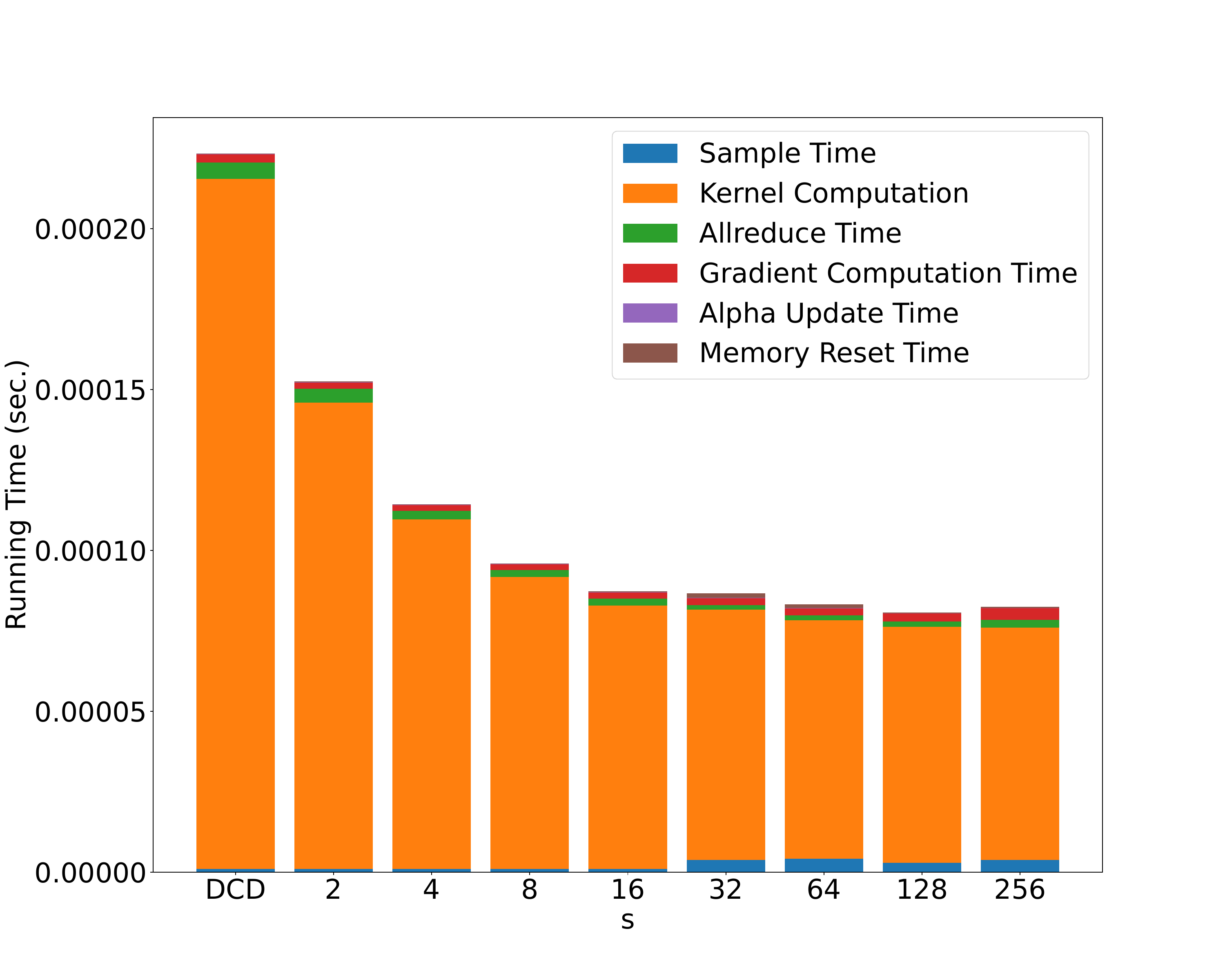}
    \caption{colon-cancer, np = 4, b = 4, gauss}
  \end{subfigure}\hfill
  \begin{subfigure}[t]{0.48\textwidth}
    \includegraphics{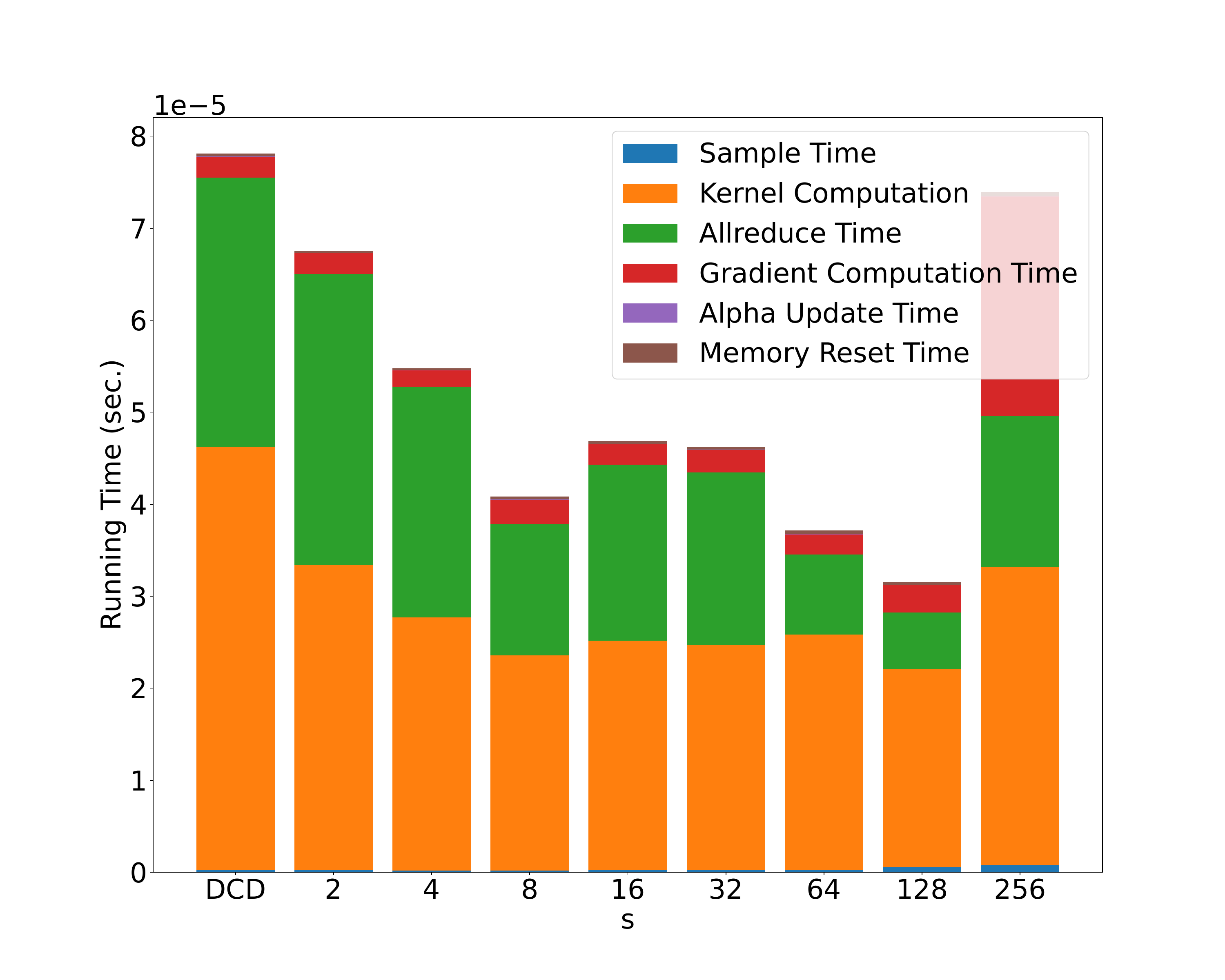}
    \caption{colon-cancer, np = 32, b = 4, gauss}
  \end{subfigure}\hfill
  \caption{Time Composition CA-BDCD vs BDCD, Colon}
\end{figure*}

\section{Conclusion}
This work demonstrates that the $s$-step DCD and $s$-step BDCD methods for K-SVM and K-RR, respectively, attain large speedups when latency is the dominant cost.
We show that this conclusion holds for dense and sparse datasets as well as datasets with non-uniform nonzero distributions that lead to load imbalance.
We also show that the performance benefits of the $s$-step methods are moderate as allreduce bandwidth becomes the dominant cost.
This observation underscores the importance of dataset characteristics and machine balance in determining the performance of the proposed methods in high-performance computing environments.
Our research extends prior work on $s$-step methods to kernelized machine learning models for classification and regression.
We show that in contrast to prior $s$-step coordinate descent and stochastic gradient descent methods, the kernel methods do not increase the total communication bandwidth (in theory) and attain speedups for a greater range of values of $s$.
In the future, we plan to further optimize the $s$-step methods' kernel computation and gradient correction overheads by approximating the sampled kernel matrix (for example using the Nystr\"{o}m method).
This performance optimization would enable the $s$-step method to scale to larger block sizes at the expense of weaker convergence.
We also aim to study the performance characteristics of the proposed methods in distributed environments (e.g. federated or cloud environments) where network latency costs are more prohibitive and where $s$-step methods may yield impactful performance improvements.

\section*{Acknowledgements}
The authors would like to thank Boyuan Pan for assistance with generating figures and Grey Ballard for helpful discussions and feedback on this manuscript.
This work was supported under Contract No. DE-AC05-00OR22725 with the US Department of Energy.
This research used resources of the National Energy Research Scientific Computing Center (NERSC), a U.S. Department of Energy Office of Science User Facility located at Lawrence Berkeley National Laboratory, operated under Contract No. DE-AC02-05CH11231 using NERSC award ASCR-ERCAP0024170.
This work also used computational resources provided by the Wake Forest University (WFU) High Performance Computing Facility for testing and debugging.

\bibliographystyle{siam}
\bibliography{refs}

\end{document}